\documentclass[acmsmall,screen]{acmart}

\AtBeginDocument{%
  }

\setcopyright{none}
\copyrightyear{2026}
\acmYear{2026}

\acmJournal{TCPS}
\acmMonth{11}

\usepackage{amsfonts}
\usepackage{amsmath}
\usepackage{amstext}
\usepackage{amsthm}
\usepackage{listings}
\usepackage{hyperref}
\usepackage{makecell}
\usepackage{xspace}
\usepackage{tikz}
\usepackage{wrapfig}
\usepackage{booktabs}
\usepackage{multirow}

\usepackage{amssymb}

\usepackage{tikz}
\usepackage{graphicx}
\usetikzlibrary{arrows.meta}
\definecolor{roadgray}{HTML}{8E8E8E}
\definecolor{lanegray}{HTML}{A6A6A6}
\definecolor{laneyellow}{HTML}{F2B01E}
\definecolor{laneaccent}{HTML}{FF5B5B}
\definecolor{trackcyan}{HTML}{7CD2CE}
\definecolor{carblue}{HTML}{1E4BFF}
\definecolor{carred}{HTML}{D93333}

\begin{document}

\newcommand{\sayan}[1]{\textcolor{black}{#1}}
\title{Verse: Semantics and Verification of Hybrid Multi-Agent Scenarios}

\author{Yangge Li}
\email{li213@illinois.edu}
\affiliation{%
  \institution{University of Illinois Urbana Champaign}
  \city{Urbana}
  \state{Illinois}
  \country{USA}
}
\author{Katherine Braught}
\email{braught2@illinois.edu}
\affiliation{%
  \institution{University of Illinois Urbana Champaign}
  \city{Urbana}
  \state{Illinois}
  \country{USA}
}
\author{Alex Yuan}
\email{ayuan20@illinois.edu}
\affiliation{%
  \institution{University of Illinois Urbana Champaign}
  \city{Urbana}
  \state{Illinois}
  \country{USA}
}
\author{Sayan Mitra}
\email{mitras@illinois.edu}
\affiliation{%
  \institution{University of Illinois Urbana Champaign}
  \city{Urbana}
  \state{Illinois}
  \country{USA}
}








\renewcommand{\shortauthors}{Li et al.}

\newcommand{\katherine}[1]{\textcolor{black}{#1}}
\newcommand{\change}[1]{\textcolor{black}{#1}}
\newcommand{\yangge}[1]{\textcolor{black}{#1}}

\newif\ifrevisionmarkup
\revisionmarkupfalse
\newcommand{\rev}[1]{\ifrevisionmarkup\textcolor{blue}{#1}\else#1\fi}
\definecolor{revtwo}{RGB}{160,0,160}
\newcommand{\revb}[1]{\ifrevisionmarkup\textcolor{revtwo}{#1}\else#1\fi}

\newif\ifflagchecks
\flagchecksfalse
\newcommand{\tocheck}[1]{\ifflagchecks\textcolor{red}{#1}\else#1\fi}

\newcommand{\type}{{\text{type}}}
\newcommand{\val}{{\text{val}}}

\newcommand{\scenario}{{SC}}
\newcommand{\agent}{\mathcal{A}}
\newcommand{\I}{\mathcal{I}}    
\newcommand{\map}{\mathcal{M}}
\newcommand{\sensor}{\mathcal{S}}
\newcommand{\mapvert}{{V}}
\newcommand{\maplabel}{{L}\xspace}
\newcommand{\maplabcurve}{\mathit{Tk}\xspace}
\newcommand{\maplablab}{\mathit{Nx}\xspace}
\newcommand{\mapcurve}{{\omega}\xspace}
\newcommand{\mapsetcurve}{{\Omega}\xspace}

\newcommand{\strategy}{{P}\xspace}
\newcommand{\strategyvar}{{p}\xspace}

\newcommand{\agentid}{{I}}
\newcommand{\agenttl}{{TL}\xspace}
\newcommand{\Zagent}{{Z}}
\newcommand{\Zagentvar}{{z}}

\newcommand{\agentdiscstate}{{P}}
\newcommand{\Xagent}{{X}}
\newcommand{\xagent}{{x}}
\newcommand{\Dagent}{{D}}
\newcommand{\dagent}{{d}}
\newcommand{\agentcontinit}{{X_0}}
\newcommand{\agentdiscinit}{{d_0}}
\newcommand{\agentcontroller}{{G}}
\newcommand{\agentflow}{{F}}
\newcommand{\agentunsafe}{{U}}
\newcommand{\agentfullstate}{{Y}}
\newcommand{\agentobs}{O}
\newcommand{\agentobsvar}{O}

\newcommand{\agentobsval}{\textbf{o}}

\newcommand{\agentobsadd}{J}
\newcommand{\agentobsaddvar}{j}
\newcommand{\agentinvariables}{I}
\newcommand{\agentinval}{\textbf{i}}
\newcommand{\agentinvar}{i}

\newcommand{\workspace}{{W}}
\newcommand{\workspacevar}{{w}}
\newcommand{\labeltrajectory}{\textbf{TL}}

\newcommand{\ourtool}{{Verse}\xspace}

\newcommand{\stateset}{X^h}
\newcommand{\fparset}{M^h}
\newcommand{\initset}{\Theta}
\newcommand{\finitpar}{m_{\mathit{init}}^h}
\newcommand{\edgeset}{E^h}
\newcommand{\haguard}{\textbf{G}}
\newcommand{\hareset}{\textbf{R}}
\newcommand{\ha}{{H}}
\newcommand{\reals}{{\mathbb{R}}} 
\newcommand{\nnreals}{{\mathbb{R}^{\geq 0}}}
\newcommand{\parset}{S}
\newcommand{\initmode}{s_{\mathit{init}}}

\newcommand{\Dha}{\textbf{D}\xspace}
\newcommand{\dha}{\textbf{d}\xspace}
\newcommand{\Xha}{\textbf{X}}
\newcommand{\Xhaset}{\textbf{S}}
\newcommand{\xha}{\textbf{x}}
\newcommand{\Tha}{\xi}
\newcommand{\transvert}{\textbf{V}}
\newcommand{\transedge}{\textbf{E}}
\newcommand{\transelab}{{elab}}
\newcommand{\tl}{\textbf{TL}}
\newcommand{\simu}{{sim}}
\newcommand{\transitiongraph}{\textbf{G}}
\newcommand{\exec}{{\alpha}}
\newcommand{\reach}[1]{{Reach_{\ha_{#1}}}\xspace}
\newcommand{\unsafe}{\textbf{U}}

\newcommand{\agentguard}{{G}\xspace}
\newcommand{\agentreset}{{R}\xspace}

\newcommand{\post}{{post}\xspace}
\newcommand{\postcont}{\texttt{postCont}\xspace}
\newcommand{\postdisc}{\texttt{postDisc}\xspace}
\newcommand{\node}{{N}\xspace}

\newcommand{\exectree}{{Tree}\xspace}
\newcommand{\treevert}{{V}\xspace}
\newcommand{\treeedge}{{E}\xspace}

\newcommand{\condval}{{CondVal}\xspace}
\newcommand{\condvals}{{CondVals}\xspace}
\newcommand{\cache}{{C}\xspace}
\newcommand{\guardcache}{{C_g}\xspace}
\newcommand{\flowcache}{{C_f}\xspace}
\newcommand{\unknown}{{\texttt{unknown}}\xspace}
\newcommand{\sat}{{\texttt{sat}}\xspace}
\newcommand{\unsat}{{\texttt{unsat}}\xspace}

\newcommand{\none}{\texttt{none}\xspace}
\newcommand{\treedepth}{{m}\xspace}
\newcommand{\algveri}{{\texttt{verify}}\xspace}
\newcommand{\alginc}{{\texttt{verifyInc}}\xspace}
\newcommand{\algfcache}{{\texttt{flowCache}}\xspace}

\newcommand{\sd}[1]{{\langle \Xhaset_{#1},\dha_{#1}\rangle}\xspace}
\newcommand{\sdp}[1]{{\langle \Xhaset_{#1}',\dha_{#1}'\rangle}\xspace}
\newcommand{\sdpp}[1]{{\langle \Xhaset_{#1}'',\dha_{#1}''\rangle}\xspace}

\newcommand{\addsensor}{\texttt{add\_sensor}\xspace}
\newcommand{\verify}{\texttt{verify}\xspace}
\newcommand{\simulate}{\texttt{simulate}\xspace}
\newcommand{\assert}{\texttt{assert}\xspace}

\newcommand{\CA}{{Q-CA}\xspace}
\newcommand{\NPC}{{Q-NPC}\xspace}

\definecolor{bg}{rgb}{0.95,0.95,0.95}

\definecolor{pygkeyword}{HTML}{008000}
\definecolor{pygstring}{HTML}{BA2121}
\definecolor{pygcomment}{HTML}{3D7B7B}
\definecolor{pygbuiltin}{HTML}{008000}
\lstdefinestyle{versepython}{
  language=Python,
  basicstyle=\ttfamily\linespread{0.859}\selectfont,
  keywordstyle=\color{pygkeyword}\bfseries,
  keywordstyle=[2]\color{pygbuiltin},
  stringstyle=\color{pygstring},
  commentstyle=\color{pygcomment}\itshape,
  morekeywords=[2]{True,False,None,self,float,int,str,bool,any,all,len,range,print},
  numbers=left,
  numberstyle=\rmfamily\tiny,
  numbersep=12pt,
  breaklines=true,
  breakatwhitespace=false,
  postbreak=,
  columns=fullflexible,
  keepspaces=true,
  showstringspaces=false,
  upquote=true,
  aboveskip=0pt,
  belowskip=0pt,
  frame=none,
}
\lstset{style=versepython}

\begin{abstract}
Verifying cyber-physical systems with many interacting agents remains difficult because interactions are mediated by sensors, individual decisions are constrained by environment (map) structure, and the resulting mode space scales combinatorially. We present \ourtool, a framework for writing agents, maps, and sensors in a familiar programming language syntax, with a formal semantics for the resulting scenarios. A compiler translates each scenario to a hybrid automaton whose executions correspond exactly to the runs of the scenario. Reachability analysis on this automaton builds an execution tree from two post operators, one for continuous flow over a time step and one for a single agent's mode change, and checks safety assertions at every vertex. The tree expands only the modes that are reached, without enumerating the global mode space. Experiments highlight expressive modeling across cars and drones, 2D/3D maps (including OpenDRIVE imports), noisy sensors, and multiple simultaneous safety assertions. \ourtool verifies a scenario with 7 agents and 37 modes in 90 seconds on a standard laptop; it is competitive with CORA and JuliaReach on ARCH benchmarks, and on  multi-agent scenarios is an order of magnitude faster than JuliaReach at the cost of looser reachable sets.
\end{abstract}

\begin{CCSXML}
<ccs2012>
 <concept>
  <concept_id>10003752.10003790.10002990</concept_id>
  <concept_desc>Theory of computation~Logic and verification</concept_desc>
  <concept_significance>500</concept_significance>
 </concept>
 <concept>
  <concept_id>10010520.10010553.10010562</concept_id>
  <concept_desc>Computer systems organization~Embedded and cyber-physical systems</concept_desc>
  <concept_significance>500</concept_significance>
 </concept>
 <concept>
  <concept_id>10011007.10011074.10011099.10011102.10011103</concept_id>
  <concept_desc>Software and its engineering~Software verification</concept_desc>
  <concept_significance>300</concept_significance>
 </concept>
 <concept>
  <concept_id>10010147.10010178.10010199</concept_id>
  <concept_desc>Computing methodologies~Multi-agent systems</concept_desc>
  <concept_significance>100</concept_significance>
 </concept>
</ccs2012>
\end{CCSXML}

\ccsdesc[500]{Theory of computation~Logic and verification}
\ccsdesc[500]{Computer systems organization~Embedded and cyber-physical systems}
\ccsdesc[300]{Software and its engineering~Software verification}
\ccsdesc[100]{Computing methodologies~Multi-agent systems}

\keywords{Hybrid systems, formal semantics, reachability analysis, safety
  verification, multi-agent systems, hybrid automata, autonomous vehicles}

\received{20 February 2026}
\received[revised]{18 November 2026}

\maketitle


\section{Introduction}
\label{sec:intro}

Hybrid automaton formalisms~\cite{henz:thhybat,mit:hioa} provide a foundational framework for modeling and verifying robotic~\cite{egerstedt2002hybrid}, autonomous~\cite{mitra_2021}, and cyber-physical systems~\cite{kumar2012hybrid}.
\yangge{Significant technical advances have been made in formal verification of hybrid systems over the past three decades, with tools such as Hylaa, CORA, and DryVR demonstrating scalability and applicability across linear and nonlinear benchmarks~\cite{10.1145/3302504.3311792,Althoff2015a,DryVRFanQM017}.}
Despite these successes, hybrid system verification has seen limited adoption in mainstream engineering workflows. A primary reason is a persistent usability barrier: existing tools require users to manually translate system designs into specialized formal modeling languages, often with a steep learning curve. This modeling effort is typically decoupled from system development, making it unclear whether the cost of formalization is justified by the resulting verification benefits.

Experience from large-scale software verification suggests that adoption improves dramatically when verification is integrated directly at the code level, avoiding translation to isolated modeling formalisms~\cite{10.1007/978-3-031-13185-1_1,45424,Reid2020TowardsMF}. While software verifiers such as CBMC and Dafny operate directly on implementation-level artifacts, hybrid verification tools typically require users to hand-construct explicit hybrid automata, including state machines, guards, and reset maps. This process becomes especially burdensome for multi-agent systems, where the global state space grows combinatorially, and where reuse across environments is limited.

We believe that \revb{three difficulties hold back the} verification of realistic multi-agent CPS scenarios: (1) expressing individual decisions independently of map geometry while still enforcing map-specific constraints, (2) formalizing sensor-mediated interaction without omniscient state access, and (3) maintaining verification scalability as agents and modes grow.

To address these challenges, we introduced Verse~\cite{10.1007/978-3-031-37706-8_18}, a Python-based framework for modeling and verifying hybrid multi-agent scenarios. In Verse, a system is defined as a {\em scenario\/} composed of {\em agents\/} embedded in a {\em map\/} and interacting via {\em sensors\/}. Each component is implemented as a Python module and interpreted by the Verse frontend to construct a formal hybrid automaton. The same agent code can be simulated, analyzed, and when appropriate executed on hardware, enabling a unified development and verification workflow.

Verse meets the three difficulties as follows. An agent's decision logic chooses among a few internal modes, such as cruising or changing lane, and a partial next-mode function supplied by the map decides whether the chosen change is admissible from the agent's current track. The same logic therefore runs on every map, and an agent can switch only to a lane that exists at its position. Agents never read one another's state. Each reads the observations returned by a sensor, a set-valued function of the other agents' observable variables, and an I/O compatibility criterion checks that the logic reads only variables some sensor supplies. Under bounded noise a sensor returns every observation the noise could produce, and a mode change is enabled when some observation in that set satisfies its guard. Verification builds an execution tree that expands only the mode tuples actually reached, so the global mode space is never enumerated.

This article provides the first rigorous formal semantics for the Verse modeling framework. Building on our earlier tool paper~\cite{10.1007/978-3-031-37706-8_18}, we formally define the mathematical structure of Verse scenarios, including precise interfaces for agents, maps, and sensors, and the conditions under which their composition is well-posed. Establishing these semantics is nontrivial because Verse intentionally decouples high-level agent decision logic from the low-level geometric and physical details of the environment, while still allowing agents to be instantiated across heterogeneous maps and scenarios. Our contributions are:
\begin{itemize}
    \item A formal semantics for multi-agent scenarios, with maps, sensors, and compatibility criteria.
    \item A translation from scenarios to hybrid automata, with a proof that the runs of a scenario and the executions of its automaton correspond exactly (Theorem~\ref{thm:correspondence}), and a condition under which the order in which simultaneously enabled agents move cannot matter (Proposition~\ref{prop:interleaving}).
    \item An evaluation of expressive modeling (cars and drones, 2D/3D maps, noisy sensors, three reachability engines), a comparison with CORA and JuliaReach on ARCH benchmarks, and a comparison with JuliaReach on two scalable multi-agent scenarios that measures what the architecture gains in speed and loses in precision.
\end{itemize}

\section{Preliminaries}
\label{sec:prelims}

Before detailing the \ourtool formalism, we establish the mathematical notation for variables, states, and logical quantifiers used throughout this paper.

Variables are the fundamental building blocks used to define the state of a system or its components. A typed variable $v$ is associated with a type, $\type(v)$, which defines the set of all values $v$ can take. We denote the set of continuous variables (where $\type(v) \subseteq \mathbb{R}$) as $V_{cont}$ and discrete variables (where $\type(v)$ is a finite set) as $V_{disc}$. For a set of variables $V$, a \emph{valuation} is a mapping $\textbf{v}$ that assigns to each variable $v \in V$ a value in its type. We denote a single valuation using a boldface letter $\textbf{v}$. The set of all possible valuations for $V$ is denoted by $\val(V)$. Thus, if a system is defined by variables $V$, its state space is $\val(V)$.

It is often necessary to extract the local state of a specific component from a global system state. Given a set of variables $V$ and a subset $V' \subseteq V$, the \emph{restriction} (or projection) of a valuation $\textbf{v} \in \val(V)$ to $V'$ is denoted by $\textbf{v} \lceil V'$. This restricted valuation is an element of $\val(V')$ such that $(\textbf{v} \lceil V')(v) = \textbf{v}(v)$ for all $v \in V'$.

Finally, the decision logic in \ourtool scenarios relies on checking Boolean conditions over these states. To model logic that reasons over collections of components (e.g., checking if "any" agent is close), we use first-order logic quantifiers over finite index sets. Let $\mathcal{I}$ be a finite set of indices (representing, for example, the collection of agents). We denote the universal quantifier as $\forall i \in \mathcal{I}$, which holds if the associated condition is true for all $i$ (corresponding to the Python \texttt{all} function). We denote the existential quantifier as $\exists i \in \mathcal{I}$, which holds if the condition is true for at least one $i$ (corresponding to the Python \texttt{any} function). Guards of this form are common in applications, for example, the benchmarks of Section~\ref{sec:exp:multiagent} both use one.
\section{Multi-Agent Scenarios in \ourtool}
\label{sec:scenarios}

\sayan{A central contribution of this work is a new formalism for modeling scenarios with multiple hybrid agents that co-evolve in a shared map and interact through sensor-based observations. In this section, we present the underlying mathematical concepts and describe the formalism as realized in Verse. Section~\ref{sec:scenariotoHA} then demonstrates how such scenarios can be represented as hybrid automata.}

\begin{figure}
\centering
\input{figures/Tikzfigures/tikz_road_diagram.tex}
    \caption{The two-lane highway running example. {\em Top}: The map consists of two parallel lanes defined by paths from (0,0) to (100,0) and (0,4) to (100,4). The diagram illustrates the four primary track modes: lane-following for $\omega_1$ ($\mathsf{T_0}$) and $\omega_2$ ($\mathsf{T_1}$), and the transitional maneuvers for changing from $\omega_1$ to $\omega_2$ ($\mathsf{M_{01}}$) and from $\omega_2$ to $\omega_1$ ($\mathsf{M_{10}}$).
    {\em Bottom}: The agents positions do not necessarily follow these reference tracks exactly. E.g., the blue car executes a lane change but does not precisely adhere to the linear $\mathsf{M_{10}}$ reference path.
    }
    \label{fig:example_map}
\end{figure}

Informally, a {\em scenario\/} defines how a collection of agents move within a shared  map and interact via  sensors. The map describes the constraint in the  workspace in the form of reference paths called {\em tracks\/}. Each  agent is a state machine with both discrete and continuous state variables. Our formalism models, agent-to-agent influence only at discrete transitions via sensor observations, even though the continuous dynamics proceed in parallel.
The  {\em observations} are determined by the scenario's sensor function, which might specify, for example, the relative distance to nearby agents or to static elements in the environment.

\subsection{Maps, Tracks and Agent's Internal Track Modes}
\label{sec:maps}
The {\em workspace} defines the physical environment in which the agents operate. Formally, a set of continuous variables, denoted by $\workspace$, represent position of a point in this environment, and the workspace is defined as $\val(\workspace)$.

For the Highway example  (see Figure~\ref{fig:example_map}),
the agent's \rev{workspace variables} $\workspace = \{x, y\}$ represent a vehicle's 2D position, and the  workspace $\val(\workspace)$ is therefore isomorphic to $\reals^2$.

A map defines  constraints on how an agent may move in the workspace. First of all,  a map provides a static collection of reference paths, called tracks, that the agents {\em may} follow. The tracks that an agent {\em does} follow at run time can of course also depend on its {\em internal discrete state or mode} such as ``continue straight`` or ``switch``'' to its continuous state or position in the workspace. Secondly, a map also governs the agent's internal decision-making by defining the specific preconditions, such as its current position or assigned track mode, that enable transitions between its internal modes.

\paragraph{Tracks}
A {\em track} is a continuous  function $\mapcurve: [0,1]\rightarrow \val(\workspace)$
that can  be used by an agent as a reference path.
A map $\map$ defines a set of allowed tracks $\mapsetcurve_{\map}$.
In the Highway example, there are two lanes of straight reference paths---one from position $(0, 0)$ to $(100, 0)$ and the second from $(0, 4)$ to $(100, 4)$. Formally, these two  are defined by tracks:  
\begin{equation}
\mapcurve_1(s) = \langle 100s, 0\rangle, \quad \mapcurve_2(s) = \langle 100s, 4\rangle , s\in[0,1]
\end{equation}
In addition to these straight line tracks, the Highway map $\map$ has a family of lane change tracks that go between the two lanes. One such track, shown in Figure~\ref{fig:example_map}, starts at $\langle 10, 0\rangle$ and is the linear path $\mapcurve(s) = \langle 10 + 4s, 4s \rangle$, $s \in [0,1]$, which ends at $\langle 14, 4\rangle$ on the second lane.
As we shall see, the agents, constrained by their own initial conditions and dynamics, may not follow the tracks exactly. 

As the highway example illustrates, a map may contain infinitely many tracks or reference paths. To keep agent behaviors abstract and portable across different maps, Verse groups tracks into a finite set of {\em track modes\/} $\maplabel_{\map}$. An agent thus chooses among abstract categories of motion, such as ``straight'' or ``change lane,'' rather than selecting a specific geometric path.
Each element in this set is a unique identifier (a name) for a family of related tracks. 
The set of track modes for the Highway scenario are: 
\[ \maplabel_{\map} = \{\mathsf{T_0}, \mathsf{T_1}, \mathsf{M_{01}}, \mathsf{M_{10}}\} \]
These modes represent distinct geometric categories of tracks available in the map. The modes $\mathsf{T_0}$ and $\mathsf{T_1}$ correspond to the sets of straight-line tracks, and  $\mathsf{M_{01}}$ and $\mathsf{M_{10}}$, correspond to the families of lane switching  paths.

\paragraph{Map}
Having introduced tracks and track modes, we can now define the maps precisely. 
A map $\map$ \rev{is} a $4$-tuple  $\langle \maplabel_\map, \maplabcurve, \strategy_\map, \maplablab \rangle$, where
(1)  $\maplabel_{\map}$ is the finite set of track modes, 
(2) $\maplabcurve: \val(\workspace)\times\maplabel_\map\rightarrow \mapsetcurve_\map$ outputs a concrete track, given a position and a track mode, 
(3) $\strategy_\map$ is the set of permissible internal agent modes allowed in $\map$, and 
(4)  
$\maplablab:\maplabel_\map \times \strategy_\map \times \strategy_\map \rightharpoonup \maplabel_\map$ is a {\em partial\/} function that outputs the   \sayan{ next track mode $\maplablab(\ell,p,p')$ from the current track mode $l$ and internal agent mode $p$ given that the agent has chosen to go to the next internal model $p'$. An example usage of $\maplablab$ is shown in Fig.~\ref{fig:maplablab}}. We write $\langle \ell, p, p'\rangle \in \mathrm{dom}(\maplablab)$ when the map permits the change, and the transition is forbidden otherwise; this is what prevents an agent from switching to a lane that does not exist at its current position.

\begin{figure}
    \centering
        \resizebox{0.8\linewidth}{!}{\input{figures/Tikzfigures/tikz_maplablab.tex}}
    \caption{The figure illustrating the role of $\maplablab$ function. In this example, the agent initially following track $\mathsf{T_0}$ with internal mode $\strategy=\texttt{Cruise}$. When the agent decides to change its internal mode to $\strategy'=\texttt{SwitchLeft}$, this intent, along with the current track mode and agent's internal mode, is fed into the $\maplablab$ function. The $\maplablab$ function then processes this input according to the map's rules. As a result, the agent is assigned to the transitional track mode $\mathsf{M_{01}}$.}
    \label{fig:maplablab}
\end{figure}

We now define the remaining components of the map tuple for our highway scenario:
\begin{itemize}
    \item  $\maplabcurve$ is defined as follows:
    \begin{equation} \maplabcurve(w=\langle w_1, w_2\rangle, l)=
        \begin{cases}
            \langle 100s, 0\rangle , \text{if~} l=T_0\\
            \langle 100s, 4\rangle , \text{if~} l=T_1\\ 
            \langle 4s+w_1-w_2, 4s\rangle ,\text{if~} l=M_{01}\\
            \langle 4s+w_1+w_2-4, 4-4s\rangle ,\text{if~} l=M_{10}
        \end{cases}
    \end{equation}
    where $s \in [0,1]$ is the \yangge{path parameter}, $w \in \val(\workspace)$ is the agent's position, and $l \in \maplabel_\map$ is the track mode.
    The nature of this mapping depends on the track mode. For modes $T_0$ and $T_1$, the function maps to fixed tracks that define the absolute geometry of Lane 1 and Lane 2, respectively. The selection of these tracks is therefore independent of the agent's position $w$.
    In contrast, the mapping for modes $M_{01}$ and $M_{10}$ is dependent on the agent's position $w$. This dependency ensures that the selected lane change reference path is always correctly positioned to intersect the agent's location, providing a valid connection between the two lanes.

    To make this concrete, suppose the agent is at $w = \langle 10, 0\rangle$ in the first lane and switches to $\mathsf{M_{01}}$. Then $\maplabcurve(w, \mathsf{M_{01}})(s) = \langle 10 + 4s, 4s\rangle$, so the reference path starts at $\langle 10, 0\rangle$ at $s=0$, passes through $\langle 12, 2\rangle$ at $s = 1/2$, and ends at $\langle 14, 4\rangle$ at $s=1$, on the second lane. Suppose instead the agent is at $w = \langle 10, 4\rangle$ in the second lane and switches to $\mathsf{M_{10}}$. Then $\maplabcurve(w, \mathsf{M_{10}})(s) = \langle 10 + 4s, 4-4s\rangle$, which starts at $\langle 10, 4\rangle$, passes through $\langle 12, 2\rangle$, and ends at $\langle 14, 0\rangle$ on the first lane. In both cases the path is traversed in the direction of increasing $x$, and $s = (4-w_2)/4$ recovers the agent's own position, which is what makes the reference path usable from wherever the agent happens to be when it commits to the maneuver.
    \item The set of permitted {\em internal modes} for this highway map, $\strategy_\map$, is defined as:
    \[ \strategy_\map = \{\texttt{Cruise}, \texttt{SwitchLeft}, \texttt{SwitchRight}\} \]
    These modes represent the agent's high-level intentions: to continue in the current lane (\texttt{Cruise}), to switch from Lane 1 to Lane 2 (\texttt{SwitchLeft}), or to switch from Lane 2 to Lane 1 (\texttt{SwitchRight}).
    \item For this example, $\maplablab$ is defined as:
    \begin{equation} \maplablab(l, p_0, p_1)=
    \begin{cases}
        \mathsf{T}_0 ,\text{if~} l\in\{\mathsf{T}_0, \mathsf{M}_{10}\}, p_1=\texttt{Cruise}\\ 
        \mathsf{T}_1 ,\text{if~} l\in\{\mathsf{T}_1, \mathsf{M}_{01}\}, p_1=\texttt{Cruise}\\ 
        \mathsf{M}_{01} ,\text{if~} l=\mathsf{T}_0, p_0=\texttt{Cruise}, p_1=\texttt{SwitchLeft}\\
        \mathsf{M}_{10} ,\text{if~} l=\mathsf{T}_1, p_0=\texttt{Cruise}, p_1=\texttt{SwitchRight}
    \end{cases}
    \end{equation}
    where $l$ is the current track mode, $p_0$ is the current internal mode, and $p_1$ is the next internal mode.
    This function encodes the "rules of the road" for this map. The first two cases handle the transition to a cruising state: if an agent's next internal intent is \texttt{Cruise}, it is mapped to the appropriate lane-following track mode ($T_0$ or $T_1$). This applies both to an agent already in the lane and one just completing a lane change. The other two cases define how lane changes are initiated: an agent cruising in a lane that selects a valid switch maneuver (\texttt{SwitchLeft} from Lane 1 or \texttt{SwitchRight} from Lane 2) is mapped to the corresponding transitional track mode ($M_{01}$ or $M_{10}$). For all other input combinations, the function is undefined, meaning the transition is not permitted. For instance, attempting to \texttt{SwitchRight} from Lane 1 ($T_0$) has no effect, as no such path exists in the map.
\end{itemize}

\subsection{Agent State  Variables }
\label{sec:variables}
Agents in the scenario are indexed by a finite set $\I$. An individual {\em agent}, $\agent_i$ for $i\in \I$, is modeled as a state machine whose state is defined by the values of two distinct types of variables. The first are {\em discrete variables}, also called {\em modes}, which take values from a finite set and change their state only during discrete transitions. The second are {\em continuous state variables}, which are real-valued and evolve continuously over time. A complete valuation of these variables constitutes the full state of an agent. To illustrate these concepts, we will continue our running example in previous section by defining the state variables for a car agent operating on the two-lane highway map. This model captures the behavior of an agent cruising on a straight road that switches to the left lane to avoid a vehicle ahead.

The discrete state, or mode, of an agent $\agent_i$ is defined by the valuations of two key sets of variables.
The first is its set of {\em internal mode} variables, denoted $\strategy_i$. A valuation of these variables represents the agent's private behavioral intent. The second is its set of {\em track mode} variables, denoted $\maplabel_i$. A valuation of these variables represents the agent's assigned track mode in the context of the map (e.g., $T_0$ or $M_{01}$), and the values are drawn from the map's global set of track modes, $\maplabel_{\map}$.

For the agent $\agent_i$ in our example shown in Figure~\ref{fig:passing_simple}, the set of {\em internal mode} variables is the singleton set $\strategy_i=\{p_i\}$ with value space $\val(\strategy_i) = \{\texttt{Cruise}, \texttt{SwitchLeft}, \texttt{SwitchRight}\}$. Similarly, the set of {\em track mode} variable, $\maplabel_i$ must be drawn from the set of track modes $\val(\maplabel_i) = \{\mathsf{T_0}, \mathsf{T_1}, \mathsf{M_{01}}, \mathsf{M_{10}}\}$ provided in our Highway example.

The collection of all discrete variables for agent $i$ is therefore the union of the internal and track mode as $\Dagent_i = \strategy_i \cup \maplabel_i$. Consequently, the agent's discrete state space—the set of all possible valuations for these variables—is given by the Cartesian product of the respective value spaces: $\val(\maplabel_i)\times\val(\strategy_i)$. These formalisms are essential for describing the compatibility between an agent and the map, a topic detailed in Section~\ref{sec:scenario}.

The continuous state of agent $i$ is determined by the valuations of a set of real-valued variables, denoted $\Xagent_i$. This set is itself the union of two groups of variables: $\Xagent_i = \workspace_i \cup \Zagent_i$. Here, $\workspace_i$ is the set of variables describing the agent's state in relation to the workspace and map's structure (typically including its physical coordinates), while $\Zagent_i$ contains all other continuous variables that pertain to the agent's own internal dynamics, such as its velocity or internal timers.

In our example, the set of {\em workspace variables} is $\workspace_i = \{x, y\}$, which represent the agent's real-valued 2D position. The remaining continuous variables are its speed and orientation, given by the set $\Zagent_i = \{v, \theta\}$.
The complete set of continuous state variables, $\Xagent_i$, is the union of these two sets, explicitly defined as $\Xagent_i = \workspace_i \cup \Zagent_i = \{x, y, v, \theta\}$. A specific continuous state for agent $i$ is therefore a valuation of these four variables.

The full state of the agent is then defined by the combination of its continuous and discrete variables. The complete set of all state variables for agent $i$ is the union $\agentfullstate_i = \Xagent_i \cup \Dagent_i$, where $\Dagent_i$ is the set of discrete variables defined previously. Consequently, the agent's full state space, the set of all possible valuations, can be expressed as the Cartesian product of the continuous and discrete state spaces: $\val(\Xagent_i) \times \val(\Dagent_i)$.

\subsection{Input and Observable Variables}
\label{sec:io_variables}
In \ourtool, agents interact with each other in a multi-agent scenario via input and observable variables.

\paragraph{Observable Variables}
In \ourtool, each agent exposes a subset of its state variables for other agents to observe. We refer to these as an agent's {\em observable variables}. The specific set of variables that an agent $\agent_i$ is permitted to observe from another agent $\agent_j$ is denoted by $\agentobs_{i,j}$. The contents of this set can depend on both the observing agent $i$ and the agent being observed $j$.

For convenience, we define $\agentobs_i$ as the complete set of all variables that agent $\agent_i$ can observe from every agent in the scenario, including itself. This set is formally the union of all pairwise observable sets:
\[ \agentobs_i = \bigcup_{j \in \I} \agentobs_{i,j} \]
The mechanism that determines which variables are observable is defined by the scenario's sensor model, which is detailed in Section~\ref{sec:sensor}.

\paragraph{Input Variables}
To compute its next state, an agent $\agent_i$ relies on a set of {\em input variables}. This set, denoted by $\agentinvariables_i$, comprises all the variables that agent $\agent_i$ uses for its internal decision logic.

It is crucial to distinguish these input variables from the observable variables ($\agentobs_i$). While the two sets are not necessarily identical, a fundamental requirement for any valid scenario is that an agent can only use information it can actually observe. Therefore, the set of an agent's input variables must be a subset of its observable variables. This constraint is formally expressed as: $\agentinvariables_i \subseteq \agentobs_i$

\begin{figure}[h!]
    \centering
    \scriptsize
\begin{lstlisting}[xleftmargin=20pt]
pass_scenario = Scenario(ScenarioConfig())
Car1 = CarAgent("car1", "passing_simple.py",
initial_state=[[0, -0.8, 0, 1.0], [0.1, 0.8, 0, 1.0]],
initial_mode=(AgentMode.Normal, TrackMode.T1),)
Car2 = CarAgent("car1" ....
pass_scenario.add_agent(Car1)
pass_scenario.add_agent(Car2)
...
//Choose one of the following
scenario.set_sensor(BaseSensor())   // Scenario with Omniscient Agent
scenario.set_sensor(DistSensor())   // Scenario with Distance Agent
pass_scenario.set_map(TwoLaneHighWayMap())
traces = scenario.simulate(20, 0.01)
traces = scenario.verify(20, 0.01)
\end{lstlisting}
\caption{Example \ourtool script for setting up the two-car highway scenario. Note the selection between different observation models for $agent_1$ (lines 10-11).}
\label{fig:passing_scenario}
\end{figure}

To illustrate these concepts, let's further expand our two-lane highway example by adding two car agents, $\agent_1$ (Car1) and $\agent_2$ (Car2) as shown in Figure~\ref{fig:passing_scenario}.
We will analyze two different sensor and decision logic configuration for $\agent_1$
\paragraph{Omniscient Agent.} First, let $\agent_1$ use the decision logic from Figure~\ref{fig:passing_simple}. This agent's logic is designed to use the full state of both agents. Therefore, its required set of input variables is the union of all state variables from both agents: $\agentinvariables_1 = \agentfullstate_1 \cup \agentfullstate_2$. We assume $\agent_1$ has omniscient observation, meaning its set of observable variables is the full states of both agents: $\agentobs_1 = \agentobs_{1,1} \cup \agentobs_{1,2} = \agentfullstate_1 \cup \agentfullstate_2$. Since $\agentinvariables_1 \subseteq \agentobs_1$ is satisfied, the configuration is \rev{valid}. 
\paragraph{Distance Agent.} Now, consider a different version of $\agent_1$ that uses the simpler decision logic from Figure~\ref{fig:simplierDL}, that only requires its own state and the relative distance to $\agent_2$, a variable we'll call \texttt{dist}. Its input variables are now $\agentinvariables_1 = \agentfullstate_1 \cup \{\texttt{dist}\}$. For this to be valid, the relative distance must be explicitly provided to the agent. Therefore, the set of observable variables $\agentobs_1$ must be defined to contain $\agentfullstate_1$ and \texttt{dist}, again satisfying the compatibility check $\agentinvariables_1 \subseteq \agentobs_1$. 

\begin{figure}
    \centering
    \scriptsize
\begin{lstlisting}[xleftmargin=20pt]
class agent_mode: {Cruise, SwitchLeft, SwitchRight}
class TrackMode: {T0, M01, M10, T1}
class State:
    x,y,v,theta : float
    agent_mode : agent_mode
    track_mode : TrackMode

def decisionLogic(ego: State, other: State, map):
    s_next = s
    if ego.agent_mode == Cruise and ego.track_mode == other.track_mode and 6 < other.x-ego.x < 8:
            #start changing to left lane
            if map.track_exist(ego.track_mode, ego.agent_mode, SwitchLeft):
                s_next.agent_mode = SwitchLeft
                s_next.track_mode = map.track(ego.track_mode, ego.agent_mode, SwitchLeft)
            #change to right lane
            ...
    #once lane changes to left lane, return to Cruise
    if ego.agent_mode == SwitchLeft:
        if ego.y >= 2.5:
            s_next.agent_mode = Cruise
            s_next.track_mode = map.track(ego.track_mode, ego.agent_mode, Cruise)
    #handle right lane
    ...
    return s_next
\end{lstlisting}
\caption{Example decision logic for the Omniscient Agent. The function is intended to have access to full states of both ego and other agents.}
\label{fig:passing_simple}
\end{figure}

\begin{figure}
    \centering
    \scriptsize
\begin{lstlisting}[xleftmargin=20pt]
class AgentMode: {Cruise, SwitchLeft, SwitchRight}
class TrackMode: {T0, T1, M10, M01}

class State:
    x,y,v,theta: float
    agent_mode: AgentMode
    track_mode: TrackMode

def decisionLogic(ego: State, dist: float, map):
    #...
\end{lstlisting}
\caption{Simplified decision logic for the Distance Agent. The function signature shows its reliance on a pre-computed dist variable, rather than the full environment.}
\label{fig:simplierDL}
\end{figure}

The set of observable variables $\agentobs_i$ serves as the crucial interface between the global state of the scenario and an agent's internal logic. The component responsible for producing these observations is the sensor model, which is the focus of the next section.

\subsection{Sensors}
\label{sec:sensor}
As mentioned in Section~\ref{sec:io_variables}, a sensor is a module that provides an agent with its observable information, which is computed from the state of all agents in the scenario. Each agent $\agent_i$ has its own specific sensor model, denoted $\sensor_i$.

Formally, the sensor for agent $\agent_i$ is a function that maps the complete state of the scenario to a set of possible valuations for that agent's observable variables. The sensor function has the following signature:
\[ \sensor_i: \val(\agentfullstate_1) \times \ldots \times \val(\agentfullstate_n) \rightarrow 2^{\val(\agentobs_i)} \]

The key feature of this design is that the sensor's output is a set of valuations, not necessarily just a single one. This allows the sensor model to represent uncertainty directly. For example, a sensor determining that the distance to another car is within a range (e.g., [10,12] meters) would return the set of all valuations where the distance variable falls within that interval. 

\subsection{The Agent}

We define $\I$ as the finite index set of agents. 
An {\em agent} $\agent_i$ is formally defined by a tuple containing its state variables, initial conditions, and dynamics. The components of this tuple define the agent's state, its perception of the world, and its behavior over time $\agent_i=\langle \Xagent_i, \Xagent_i^0, \Dagent_i, \Dagent_i^0, \agentinvariables_i, \agentobs_i, \agentguard_i, \agentreset_i, \agentflow_i\rangle$. We will now define each of these components.

\paragraph{State Space and Initialization}
The agent's state is defined by its set of state variables, $\agentfullstate_i = \Xagent_i \cup \Dagent_i$. The agent's state space is the set of all possible valuations of these variables, denoted $\val(\agentfullstate_i)$. The set of possible initial states for the agent is a subset of this space, denoted $\agentfullstate_i^0 \subseteq \val(\agentfullstate_i)$.

\paragraph{Perception and Inputs}
The agent's perception is governed by its sensor model, $\sensor_i$. As defined in Section~\ref{sec:sensor}, this function takes the full system state and produces a set of valuations for the agent's observable variables, $\agentobs_i$. The agent's internal decision logic, in turn, acts upon its set of input variables, $\agentinvariables_i$, which must be a subset of its observable variables ($\agentinvariables_i \subseteq \agentobs_i$).

\paragraph{Discrete Transitions (Guards and Resets)}
The agent's discrete behavior consists of instantaneous transitions between its discrete modes. Each potential transition is governed by a guard and a reset function.
\begin{itemize}
\item The Guard $\agentguard_i$ defines the conditions under which a transition can occur. For each pair of discrete modes $(\dagent, \dagent')$, the guard $\agentguard_i(\dagent, \dagent')$ specifies a subset of the input value space, $\val(\agentinvariables_i)$. A transition from $\dagent$ to $\dagent'$ is enabled if the agent's current input valuation falls within this set.
\item The Reset $\agentreset_i$ defines how the agent's continuous variables are updated during a discrete transition. For each transition $(\dagent,\dagent')$, the reset is a function that maps the agent's pre-transition state to its new continuous state: $\agentreset_i(\dagent, \dagent'): \val(\agentinvariables_i) \rightarrow \val(\Xagent_i)$.
\end{itemize}

\paragraph{Continuous Evolution (Flow)}
While the agent remains in a fixed discrete mode $d$, its continuous state variables $\Xagent_i$ evolve over time. This evolution is described by the agent's {\em flow function}, $\agentflow_i$, which determines the resulting continuous state after a given duration of evolution.

Formally, the function takes an initial continuous state, the agent's current discrete mode, and a time duration, and it returns the single continuous state at the end of that duration. The signature is:
\[ \agentflow_i: \val(\Xagent_i) \times \val(\Dagent_i) \times \mathbb{R}_{\ge 0} \rightarrow \val(\Xagent_i) \]
The underlying dynamics that this function represents may be specified by a set of ordinary differential equations (ODEs) or, more generally, be given by a black-box simulator.

\subsection{Assembling a Complete Scenario}
\label{sec:scenario}
In the preceding sections, we have defined the essential building blocks of our multi-agent scenario: the static map that provides the environment, the agents with their internal state and behavioral rules (guards, resets, and flows), and the sensors that provide the perceptual link between them. We now formalize how these components are assembled into a complete and coherent scenario.

A {\em scenario} in \ourtool is created by combining a map with a collection of agents, as illustrated in Figure \ref{fig:connect}. Formally, a scenario is a tuple $\scenario = \langle \map, \langle \agent_i, \sensor_i\rangle_{i \in \I} \rangle$, consisting of a map $\map$ and an indexed family of $n$ agents coupled with their corresponding sensor.

\begin{figure}
    \centering
    \resizebox{\linewidth}{!}{\begin{tikzpicture}[x=1cm,y=1cm,>=Latex, line cap=round, line join=round]
  \definecolor{scenarioGray}{RGB}{127,127,127}
  \definecolor{scenarioLightGray}{RGB}{166,166,166}
  \definecolor{scenarioTeal}{RGB}{4,36,51}
  \definecolor{scenarioBlue}{RGB}{0,176,240}

  \def\W{18.0}
  \def\H{7.4}
  \def\outerPad{0.3}

  \draw[scenarioGray, line width=1.2pt, rounded corners=0.7cm]
    (\outerPad,\outerPad) rectangle (\W-\outerPad,\H-\outerPad);
  \node[anchor=north west, text=black] at (\outerPad+0.3,\H-\outerPad-0.3) {Scenario};

  \def\mapW{3.6}
  \def\mapH{0.9}
  \def\mapX{9.0}
  \def\mapY{6.3}
  \draw[scenarioLightGray, line width=1.2pt, rounded corners=0.3cm]
    (\mapX-\mapW/2,\mapY-\mapH/2) rectangle (\mapX+\mapW/2,\mapY+\mapH/2);

  \node[text=black] at (\mapX,\mapY) {Map};

  \def\gW{4.8}
  \def\gH{4.2}
  \def\gBottom{1.0}
  \pgfmathsetmacro{\gTop}{\gBottom+\gH}
  \def\topBoxW{4.1}
  \def\topBoxH{1.4}
  \def\topBoxBottom{3.4}
  \def\bottomBoxW{2.6}
  \def\bottomBoxH{1.0}
  \def\bottomBoxBottom{1.5}
  \def\loopOffset{0.2}

  \pgfmathsetmacro{\bracketY}{\gTop+0.35}
  \pgfmathsetmacro{\arrowEndY}{\gTop-0.10}
  \draw[scenarioGray, line width=1.2pt] (\mapX,\mapY-\mapH/2) -- (\mapX,\bracketY);
  \draw[scenarioGray, dashed, line width=1.2pt] (4.0,\bracketY) -- (14.0,\bracketY);
  \foreach \x in {4.0,9.0,14.0} {
    \draw[scenarioGray, -{Latex[length=3mm]}, line width=1.2pt]
      (\x,\bracketY) -- (\x,\arrowEndY);
  }

  \newcommand{\agentgroup}[2]{%
    \pgfmathsetmacro{\gx}{#1}
    \draw[scenarioGray, dashed, line width=1.1pt, rounded corners=0.4cm]
      (\gx-\gW/2,\gBottom) rectangle (\gx+\gW/2,\gTop);

    \draw[laneyellow, line width=1.2pt, rounded corners=0.3cm, fill=white]
      (\gx-\topBoxW/2,\topBoxBottom) rectangle
      (\gx+\topBoxW/2,\topBoxBottom+\topBoxH);

    \def\decW{2.4}
    \def\decH{1.05}
    \draw[scenarioBlue, line width=1.2pt, rounded corners=0.2cm, fill=white]
      (\gx-\decW/2,\topBoxBottom+0.12) rectangle
      (\gx+\decW/2,\topBoxBottom+0.12+\decH);
    \node[text=black] at (\gx,\topBoxBottom+0.12+\decH/2) {Decision Logic};
    \node[text=black] at (\gx-\topBoxW/2+0.55,\topBoxBottom+0.7) {$I_{#2}$};
    \node[text=black] at (\gx+\topBoxW/2-0.55,\topBoxBottom+0.7) {$A_{#2}$};

    \draw[laneyellow, line width=1.2pt, rounded corners=0.3cm, fill=white]
      (\gx-\bottomBoxW/2,\bottomBoxBottom) rectangle
      (\gx+\bottomBoxW/2,\bottomBoxBottom+\bottomBoxH);
    \node[text=black] at (\gx,\bottomBoxBottom+0.5) {$S_{#2}$};

    \draw[scenarioGray, -{Latex[length=3mm]}, line width=1.2pt]
      (\gx-0.7,\gBottom-0.2) -- (\gx-0.7,\bottomBoxBottom);
    \draw[scenarioGray, -{Latex[length=3mm]}, line width=1.2pt]
      (\gx+0.7,\gBottom-0.2) -- (\gx+0.7,\bottomBoxBottom);
    \node[text=black, anchor=north] at (\gx,\gBottom+0.04) {$Y_1 \ldots Y_n$};

    \draw[scenarioGray, line width=1.2pt]
      (\gx+\topBoxW/2,\topBoxBottom+0.55) --
      (\gx+\topBoxW/2+\loopOffset,\topBoxBottom+0.55) --
      (\gx+\topBoxW/2+\loopOffset,\gBottom-0.1);
    \draw[scenarioGray, -{Latex[length=3mm]}, line width=1.2pt]
      (\gx+\topBoxW/2+\loopOffset,\bottomBoxBottom+0.5) --
      (\gx+\bottomBoxW/2,\bottomBoxBottom+0.5);
    \draw[scenarioGray, -{Latex[length=3mm]}, line width=1.2pt]
      (\gx+\topBoxW/2+\loopOffset,\gBottom-0.1) -- (\gx+\topBoxW/2+\loopOffset,0.4);

    \draw[scenarioGray, line width=1.2pt]
      (\gx-\bottomBoxW/2,\bottomBoxBottom+0.5) --
      (\gx-\topBoxW/2-\loopOffset,\bottomBoxBottom+0.5) --
      (\gx-\topBoxW/2-\loopOffset,\topBoxBottom+0.55);
    \draw[scenarioGray, -{Latex[length=3mm]}, line width=1.2pt]
      (\gx-\topBoxW/2-\loopOffset,\topBoxBottom+0.55) --
      (\gx-\topBoxW/2,\topBoxBottom+0.55);

    \pgfmathsetmacro{\leftLoopX}{\gx-\topBoxW/2-\loopOffset}
    \pgfmathsetmacro{\rightLoopX}{\gx+\topBoxW/2+\loopOffset}
    \pgfmathsetmacro{\loopMidY}{(\topBoxBottom+0.55+\bottomBoxBottom+0.5)/2}
    \node[text=black, anchor=west] at (\leftLoopX+0.15,\loopMidY) {$O_{#2}$};
    \node[text=black, anchor=west] at (\rightLoopX-0.65,\loopMidY) {$Y_{#2}$};

  }

  \agentgroup{4.0}{1}
  \agentgroup{9.0}{i}
  \agentgroup{14.0}{n}
\end{tikzpicture}}
    \caption{Architectural overview of a scenario. The sensor $\sensor_i$ processes the Global System State to produce the set of Observable Variables ($\agentobs_i$). The agent $\agent_i$ receives these observations, and its internal Decision Logic uses a subset of this information, known as its Input Variables ($\agentinvariables_i$), to determine its next action. The Map provides the environmental context for all agents.}
    \label{fig:connect}
\end{figure}

While this tuple combines the necessary components, a scenario is only considered valid if its parts are consistent. These compatibility rules ensure, for example, that an agent's decision-making logic only uses information its sensors can actually provide, and that its intended behaviors are permitted by the map. We formalize these requirements as two key compatibility criteria:

The first criterion governs the flow of information between agents. As established in Section~\ref{sec:io_variables}, an agent's decision logic can only use information that is actually available to it.

\begin{definition}[I/O Compatibility]
An agent $\agent_i$ is I/O compatible within a scenario if its set of required input variables is a subset of its set of observable variables:
\[ \agentinvariables_i \subseteq \agentobs_i \]
\end{definition}

The second criterion ensures that an agent's capabilities and behaviors align with the structure defined by the map.

\begin{definition}[Map Compatibility]
An agent $\agent_i$ is compatible with a map $\map$ if the following conditions hold:
\begin{itemize}
    \item Shared Workspace: The agent's workspace variables and the map's workspace variables must define the same space, i.e., $\val(\workspace_i) = \val(\workspace)$.
    \item Permitted Tactics: Every internal mode the agent can take is permitted by the map, i.e., $\val(\strategy_i) \subseteq \strategy_\map$, and every track mode it can take is one the map provides, i.e., $\val(\maplabel_i) \subseteq \maplabel_\map$.
    \item {\em Admissible transitions:} for every pair of agent modes $\dagent = \langle \ell, p\rangle$ and $\dagent' = \langle \ell', p'\rangle$ with $\agentguard_i(\dagent, \dagent') \neq \emptyset$, the map permits the change and assigns the track mode the agent claims:
    \[ \langle \ell, p, p'\rangle \in \mathrm{dom}(\maplablab) \quad\text{and}\quad \ell' = \maplablab(\ell, p, p'). \]
\end{itemize}
\end{definition}
The third condition is what ties an agent's decision logic to the map it is instantiated on. An agent is written without reference to any particular map, so nothing in its own definition prevents it from proposing a lane change that the map does not offer; the condition rules out exactly those scenarios, and Section~\ref{sec:scenariotoHA} shows that it is also what makes the guard of the composed automaton well defined.
While the scenario model is now complete, it is not yet in a form suitable for rigorous analysis. The next section provides the necessary formalization by mapping the scenario to a hybrid automaton.

\section{\ourtool Scenario to Hybrid Verification}
\label{sec:scenariotoHA}

The scenario defined in Section~\ref{sec:scenario} provides a high-level description of a multi-agent system. To enable formal analysis, we now define the mapping from a \ourtool scenario, $\scenario$, to its underlying mathematical representation as a hybrid automaton, which we denote $\ha(\scenario)$. This formalization is crucial as it allows us to apply rigorous verification algorithms developed for hybrid systems to analyze the safety and behavior of the scenario.

Given a scenario $\scenario = \langle \map, \langle \agent_i, \sensor_i\rangle_{i \in \I} \rangle$, composed of a map and an indexed family of agents, its corresponding hybrid automaton is defined by the tuple:
\[ \ha(\scenario) = \langle \Xha, \Xha^0, \Dha,  \Dha^0, \haguard, \hareset, \tl \rangle \]
where the components representing the state space, initial conditions, and dynamics are defined in the following sections.

\subsection{State Space and Initialization}
The state space of the automaton is composed of continuous and discrete components.

\paragraph{Continuous State Space} The {\em continuous state space} of the automaton is the set of all possible valuations of the system's continuous variables:
\[ \Xha := \val\left(\bigcup_{i \in \I} \Xagent_i\right) \]
A valuation $x \in \Xha$ represents a complete continuous state of the system. The set of {\em initial continuous states}, $\Xha^0 \subseteq \Xha$, is composed of all continuous states that are part of a valid initial configuration of the scenario.

\paragraph{Discrete State Space} The {\em discrete state space}, or {\em mode space}, is the set of all possible valuations of the system's discrete variables:
\[ \Dha := \val\left(\bigcup_{i \in \I} \Dagent_i\right) \]
A valuation $d \in \Dha$ represents a complete discrete state, or mode, of the system. The set of {\em initial modes}, $\Dha^0 \subseteq \Dha$, is composed of all discrete modes that are part of a valid initial scenario configuration.

\subsection{Automaton Dynamics}
An agent's dynamic behavior is defined by its discrete transitions and continuous evolution. Discrete transitions are given an {\em interleaving\/} semantics: a transition of the composed automaton advances exactly one agent, and agents whose guards are enabled at the same instant are represented by the sequences of single-agent transitions that can be taken at that instant, with no time elapsing between them. Proposition~\ref{prop:interleaving} states the condition under which this loses nothing relative to a semantics in which such agents move together.
The components governing these dynamics are defined as follows.

\paragraph{Guard} The {\em guard} for the automaton, $\haguard$, determines when a discrete transition between two global modes, $\dha$ and $\dha'$, is allowed. A transition is enabled only if exactly one agent is making a move while all others remain idle.

To define this formally, let a global discrete mode $\dha \in \Dha$ be a tuple of individual agent modes, $\dha = (d_i)_{i \in \I}$. A transition from $\dha = (d_i)_{i \in \I}$ to $\dha' = (d'_i)_{i \in \I}$ is enabled for a continuous state $\xha \in \Xha$ if and only if there is exactly one active agent $i \in \I$ that satisfies the following conditions:

\begin{enumerate}
    \item \textit{Mode Change:} The active agent $i$ changes its discrete mode, i.e., $d_i \neq d'_i$.

    \item \textit{Synchronicity:} All other agents $j \neq i$ remain in their current modes, i.e., $d_j = d'_j$.

    \item \textit{Local Guard Satisfaction:} The active agent's local guard for its transition from $d_i$ to $d'_i$ is satisfied. Writing $\mathbf{y} = \langle \xha, \dha\rangle$ for the full state of the scenario, this means the set of observations produced by its sensor contains at least one valuation permitted by the guard:
    \[ \left(\sensor_i(\mathbf{y}) \lceil \agentinvariables_i\right) \cap \agentguard_i(d_i, d'_i) \neq \emptyset. \]

\begingroup\ifrevisionmarkup\color{blue}\fi
    \item \textit{Map Admissibility:} writing $d_i = \langle \ell_i, p_i\rangle$ and $d'_i = \langle \ell'_i, p'_i\rangle$, the map permits the change and assigns the new track mode:
    \[ \langle \ell_i, p_i, p'_i\rangle \in \mathrm{dom}(\maplablab) \quad\text{and}\quad \ell'_i = \maplablab(\ell_i, p_i, p'_i). \]
\endgroup
\end{enumerate}

\begingroup\ifrevisionmarkup\color{blue}\fi
The set $\haguard(\dha, \dha')$ is therefore the set of all continuous states $\xha$ for which such an active agent exists.
The map constraints enter the semantics in (4): Without it an agent's decision logic could propose any mode change it liked and the composed automaton would admit it, so a car could switch to a lane that does not exist beside it. Condition (4) forces the reachable mode space to the changes the map actually offers. If $\agent_i$ is map compatible then the second half of the condition is implied by the first, since the guard is non-empty only for transitions the map admits.

\endgroup

\paragraph{Reset Map} The {\em reset map}, $\hareset(\dha, \dha')$, is a function $\hareset(\dha, \dha'):\Xha \rightarrow \Xha$ that defines how the system's continuous state $\xha$ is updated during a discrete transition from $\dha$ to $\dha'$.


 \paragraph{Reset Map} The {\em reset map}, $\hareset(\dha, \dha')$, is a function $\hareset(\dha, \dha'):\Xha\rightarrow \Xha$ that defines how the system's continuous state $\xha$ is updated during a discrete transition. The reset's behavior depends on which agent is active in the transition.

\begingroup\ifrevisionmarkup\color{blue}\fi
By the guard condition above, $\hareset(\dha, \dha')$ is defined only when there is exactly one active agent $i$. Because the sensor is set-valued, a single continuous state may admit several observations that enable the transition, and each may reset the agent differently:
\[ \hareset(\dha, \dha')(\xha) = \left\{\, \xha' \;\middle|\; \xagent'_i = \agentreset_i(d_i, d'_i)(\agentinval),\ \ \xagent'_j = \xagent_j \text{ for } j \neq i,\ \ \agentinval \in \left(\sensor_i(\mathbf{y}) \lceil \agentinvariables_i\right) \cap \agentguard_i(d_i, d'_i) \right\}. \]
Only the active agent's continuous variables are written; every other agent's state remains constant.
\endgroup

\paragraph{Continuous Evolution (Flows)} The continuous dynamics of the hybrid automaton are governed by the flow functions of the individual agents. While the system remains in a fixed global discrete mode $\dha$, the global continuous state $\xha$ evolves according to the combined dynamics of all agents running in parallel.

Formally, a function $\Tha: [0, T] \rightarrow \Xha$ is a valid {\em trajectory} for the system starting from a continuous state $\Tha(0) = \xha^0$ in a constant mode $\dha = (d_i)_{i \in \I}$ if, for every agent $i \in \I$ and for all time $t \in [0,T]$, the following condition holds:
\[ \Tha_i(t) = \agentflow_i(x^0_i, d_i, t) \]
Here, $\Tha_i(t)$ denotes the component of the global continuous state corresponding to agent $i$ at time $t$, and $x^0_i$ is the initial continuous state of agent $i$ (i.e., the $i$-th component of $\xha^0$). This definition shows how the global system's trajectory is composed from the individual trajectories defined by each agent's flow function, $\agentflow_i$.
The component $\tl$ is therefore the set of all such pairs $\langle \Tha, \dha\rangle$ representing the valid continuous dynamics of the system.
The underlying dynamics this function represents are defined by a simulator. While this is typically a black-box function, analytical "white-box" models, such as ordinary differential equations (ODEs), are also supported by \ourtool, provided they are executable within its Python environment.

\subsection{Formal Semantics as a Hybrid Automaton}
\label{sec:semantics}
\begingroup\ifrevisionmarkup\color{blue}\fi
The construction of the previous sections maps each component of a scenario to a component of $\ha(\scenario) = \langle \Xha, \Xha^0, \Dha, \Dha^0, \haguard, \hareset, \tl \rangle$. We summarize the components in the table below.

\begin{table}[!htp]
\centering
\label{tab:components}
\small
\begin{tabular}{l l l}
\toprule
Automaton & Determined by & Definition \\
\midrule
$\Xha$   & agents' continuous variables & $\val(\bigcup_{i} \Xagent_i)$ \\
$\Dha$   & agents' modes, internal and track & $\val(\bigcup_{i} \Dagent_i)$, $\Dagent_i = \strategy_i \cup \maplabel_i$ \\
$\Xha^0, \Dha^0$ & agents' initial conditions & projections of $\bigtimes_i \agentfullstate_i^0$ \\
$\haguard$ & guards, sensors, and the map's $\maplablab$ & one active agent; $(\sensor_i \lceil \agentinvariables_i) \cap \agentguard_i \neq \emptyset$; $\maplablab$ admits the change \\
$\hareset$ & the active agent's reset & $\agentreset_i$ on $\Xagent_i$, identity elsewhere \\
$\tl$      & agents' flows & $\Tha_i(t) = \agentflow_i(\xagent^0_i, d_i, t)$ for every $i$ \\
\bottomrule
\end{tabular}
\end{table}

Fix $\delta > 0$. A {\em $\delta$-run\/} of $\scenario$ is a sequence $\langle \sigma^0, \dha^0 \rangle, \ldots, \langle \sigma^{m-1}, \dha^{m-1}\rangle$ in which each $\sigma^k$ assigns to every agent $i$ a function $\sigma^k_i : [0,\delta] \rightarrow \val(\Xagent_i)$, such that
\begin{enumerate}
    \item {\em Initialization.} $\langle \sigma^0_i(0), \dha^0_i \rangle \in \agentfullstate_i^0$ for every agent $i$.
    \item {\em Evolution.} $\sigma^k_i(t) = \agentflow_i(\sigma^k_i(0), \dha^k_i, t)$ for every agent $i$, every $k$, and all $t \in [0, \delta]$.
    \item {\em Interaction.} For each $k < m-1$ there is exactly one agent $i$ with $\dha^{k+1}_i \neq \dha^k_i$; the others satisfy $\dha^{k+1}_j = \dha^k_j$ and $\sigma^{k+1}_j(0) = \sigma^k_j(\delta)$. Writing $\mathbf{y}$ for the state of the scenario at the end of step $k$, there is an observation $\agentinval \in (\sensor_i(\mathbf{y}) \lceil \agentinvariables_i) \cap \agentguard_i(\dha^k_i, \dha^{k+1}_i)$, the map admits the change, and $\sigma^{k+1}_i(0) = \agentreset_i(\dha^k_i, \dha^{k+1}_i)(\agentinval)$.
\end{enumerate}
A $\delta$-run mentions only agents, sensors and the map, and no global mode space, guard or reset.

\begin{theorem}
\label{thm:correspondence}
Let $\scenario = \langle \map, \langle \agent_i, \sensor_i\rangle_{i \in \I}\rangle$ be a scenario with I/O compatible and map compatible agents. Then $\ha(\scenario)$ is a hybrid automaton in the sense of Definition~1.1 of~\cite{561342}, and for every $\delta > 0$ a sequence is a $\delta$-run of $\scenario$ if and only if its componentwise image is a $\delta$-execution of $\ha(\scenario)$. Consequently $\reach{}(\scenario) = \reach{}(\ha(\scenario))$, and $\scenario$ satisfies a safety assertion exactly when $\ha(\scenario)$ does.
\end{theorem}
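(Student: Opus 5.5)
The proof has three parts, taken in order: first check that $\ha(\scenario)$ is a hybrid automaton in the sense of Definition~1.1 of~\cite{561342}; then prove the run/execution correspondence by induction on the length $m$; finally read off the statements about reachability and safety. The definition of a $\delta$-execution of $\ha(\scenario)$ is taken from~\cite{561342}. It is a sequence $\langle \Tha^0,\dha^0\rangle,\ldots,\langle\Tha^{m-1},\dha^{m-1}\rangle$ with $\Tha^0(0)\in\Xha^0$ and $\dha^0\in\Dha^0$, each $\langle\Tha^k,\dha^k\rangle\in\tl$ on $[0,\delta]$, and $\Tha^k(\delta)\in\haguard(\dha^k,\dha^{k+1})$ with $\Tha^{k+1}(0)\in\hareset(\dha^k,\dha^{k+1})(\Tha^k(\delta))$. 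The ``componentwise image'' of a $\delta$-run sends $\langle\sigma^k,\dha^k\rangle$ to $\langle\Tha^k,\dha^k\rangle$, where $\Tha^k(t)$ is the valuation of $\bigcup_i\Xagent_i$ whose restriction to $\Xagent_i$ is $\sigma^k_i(t)$. Since the $\Xagent_i$ are disjoint variable sets, this map is a bijection between per-agent families and global trajectories.

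For well-formedness, I check each component of the table. $\Xha$ and $\Dha$ are valuation spaces, and $\Dha$ is finite because each $\Dagent_i$ is discrete and $\I$ is finite. $\Xha^0\times\Dha^0$ is the product of the $\agentfullstate_i^0$. $\tl$ is closed under the prefix and suffix (time-shift) operations that Definition~1.1 requires, provided each $\agentflow_i$ has the semigroup property $\agentflow_i(\agentflow_i(x,d,s),d,t)=\agentflow_i(x,d,s+t)$. I would state this as the standing assumption implicit in calling $\agentflow_i$ a flow; for ODE-defined flows it is automatic. The substantive part is that $\haguard$ and $\hareset$ are well defined. Two global modes $\dha\neq\dha'$ differ either in exactly one agent or in more than one. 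In the latter case $\haguard(\dha,\dha')=\emptyset$ and $\hareset$ is vacuous. In the former case the active agent $i$ is uniquely determined by the pair $(\dha,\dha')$, so the guard condition ``there is exactly one active agent'' and the reset formula refer to the same $i$. Condition (4) is implied by map compatibility whenever $\agentguard_i(d_i,d_i')\neq\emptyset$, which is the remark after the guard definition. The guard conditions and the reset set are expressed through $\sensor_i(\mathbf y)\lceil\agentinvariables_i$, and this projection makes sense because I/O compatibility gives $\agentinvariables_i\subseteq\agentobs_i$. The reset is set-valued, so I will note that Definition~1.1 permits nondeterministic resets (relations), or else read $\hareset$ as a relation.

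The correspondence goes clause by clause. Initialization of the run corresponds to $\Tha^0(0)\in\Xha^0$ and $\dha^0\in\Dha^0$ via the product structure. Evolution corresponds to membership in $\tl$, since the definition of a trajectory is exactly $\Tha_i(t)=\agentflow_i(\Tha_i(0),d_i,t)$ for every $i$. For Interaction I argue both directions. ($\Rightarrow$) Given the unique changing agent $i$ and the witness $\agentinval$, conditions (1)--(4) of the guard hold at $\xha=\Tha^k(\delta)$. The new state, with $\sigma^{k+1}_i(0)=\agentreset_i(\cdot)(\agentinval)$ and $\sigma^{k+1}_j(0)=\sigma^k_j(\delta)$ for $j\neq i$, is an element of the reset set. ($\Leftarrow$) Membership in $\haguard$ yields the unique active $i$ together with the map admissibility. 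Membership of $\Tha^{k+1}(0)$ in the reset set yields some $\agentinval$ in the intersection, which serves as the run's witness, and it also gives $\sigma^{k+1}_j(0)=\sigma^k_j(\delta)$ for $j\neq i$.

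The step I expect to be the main obstacle is a subtle one. The guard condition and the reset must use the same observation witness, and a run's witness and an execution's witness need not coincide as choices. The reset formula quantifies over exactly the set $(\sensor_i(\mathbf y)\lceil\agentinvariables_i)\cap\agentguard_i$, and nonemptiness of this set is implied by membership in the reset set. So it suffices to take the witness from the reset membership, and I would make this explicit. Finally, reachable states are by definition the states $\langle\Tha^k(t),\dha^k\rangle$ along executions and the states $\langle\sigma^k(t),\dha^k\rangle$ along runs. The bijection preserves these, so $\reach{}(\scenario)=\reach{}(\ha(\scenario))$, and a safety assertion, being a set of unsafe states, is avoided by one exactly when it is avoided by the other.
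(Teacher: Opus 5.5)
Your proposal follows the paper's proof: the juxtaposition bijection, Condition~2 matched to $\tl$, Condition~3 matched to the four clauses of $\haguard$ and to $\hareset$ in both directions, then induction and transfer of reachable sets. Your choice to take the converse witness from membership in $\hareset$, rather than from membership in $\haguard$ as the paper does, is the more careful one, since Condition~3 needs the very observation that produced $\sigma^{k+1}_i(0)$. The one real departure is your semigroup assumption on $\agentflow_i$, which the paper does not make and the correspondence does not use. Condition~2 and the definition of $\tl$ both evaluate $\agentflow_i$ afresh from the start of each $\delta$-segment, so arbitrary simulators are covered, including the sample-and-hold flows of Section~\ref{sec:sampleddata}, which need not be semigroups. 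The paper settles the Definition~1.1 clause instead by taking every invariant to be $\mathit{true}$ and the event alphabet empty.
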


\begin{proof}
 The invariant of every control mode is $\mathit{true}$ and the event alphabet $\Sigma$ is empty, since transitions are autonomous: agents influence one another only through the observations their guards read.

Let $\Phi$ send a scenario state $\langle (\sigma_i)_{i}, (\dha_i)_{i}\rangle$ to $\langle \xha, \dha \rangle$ by juxtaposing components. It is a bijection onto $\Xha \times \Dha$, which are the valuations of the unions of the agents' variables, and I/O compatibility leaves no variable of a guard unbound. So it suffices that $\Phi$ carry single steps to single steps both ways.

Take a step of a $\delta$-run. Condition~2 is the condition defining $\tl$, so $\langle \Phi(\sigma^k), \dha^k \rangle \in \tl$. Condition~3 supplies an agent $i$ changing mode while the rest stand still, an observation in $(\sensor_i(\mathbf{y}) \lceil \agentinvariables_i) \cap \agentguard_i(\dha^k_i, \dha^{k+1}_i)$, and a map-admissible change: the four clauses of $\haguard$. Hence $\Phi(\sigma^k(\delta)) \in \haguard(\dha^k, \dha^{k+1})$, and the same observation witnesses $\Phi(\sigma^{k+1}(0)) \in \hareset(\dha^k, \dha^{k+1})(\Phi(\sigma^k(\delta)))$.

Conversely, membership in $\tl$ unpacks to Condition~2 agent by agent, and membership in $\haguard(\dha^k, \dha^{k+1})$ yields a unique active agent and an observation satisfying its guard, giving Condition~3. Map compatibility makes the track mode named by $\dha^{k+1}_i$ the one $\maplablab$ assigns, so the step is realizable by $\agent_i$ on $\map$. Induction along the sequence gives the claim, and the reachable sets agree since corresponding runs end at states identified by $\Phi$.
\end{proof}

The compatibility conditions are not  vacuous. Without I/O compatibility---say the decision logic branches on \texttt{other.v} while the sensor exposes only \texttt{dist}, as in Section~\ref{sec:io_variables}---$\agentguard_i$ constrains a variable no observation binds, $\Phi$ is not a bijection, and $\haguard$ is not a well-defined subset of $\Xha$. Without map compatibility the guard admits a pair at which $\maplablab$ is undefined, as the highway map is for \texttt{SwitchRight} from $\mathsf{T_0}$, and the construction admits transitions no agent can take on the map. \ourtool checks both when a scenario is assembled.
\endgroup

\subsection{On the interleaving semantics}
\label{sec:interleaving}
\begingroup\ifrevisionmarkup\color{blue}\fi
A transition of $\ha(\scenario)$ advances one agent, so consider two agents whose guards are enabled by the same observation, as when two cars each detect the other and both decide to change lane. The modeler need not separate observation from action: both orders are behaviors of $\ha(\scenario)$, and the two occurs at the same real time. Interleaving fixes an order, and the following Proposition states when the order cannot matter.

\begin{proposition}
\label{prop:interleaving}
Suppose for every $i \neq j$ that the variables $\agentreset_i$ modifies are disjoint from $\agentobs_{j,i}$. Then for any two agents enabled at the same state, both interleavings reach the same state, and that  is the same state a joint transition modeling a simultaneous move would reach.
\end{proposition}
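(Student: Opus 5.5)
We fix a global state $\langle \xha, \dha\rangle$ at which two distinct agents $i$ and $j$ are both enabled. That is, there are modes $d'_i$ and $d'_j$ and observations
\[ \agentinval_i \in (\sensor_i(\mathbf{y})\lceil \agentinvariables_i)\cap \agentguard_i(d_i,d'_i), \qquad \agentinval_j \in (\sensor_j(\mathbf{y})\lceil \agentinvariables_j)\cap \agentguard_j(d_j,d'_j). \]
We compare three end states. The first is the order ``$i$ then $j$''. The second is the order ``$j$ then $i$''. The third is the joint move, in which $\xagent_i$ becomes $\agentreset_i(d_i,d'_i)(\agentinval_i)$, $\xagent_j$ becomes $\agentreset_j(d_j,d'_j)(\agentinval_j)$, both modes update, and every other agent is left unchanged. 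The overall plan is to show that after $i$ fires, the observations available to $j$ and the guard verdicts for $j$ are the same as before. Once that holds, $j$ can fire with the same $\agentinval_j$ and yields the same reset value. The symmetric argument covers the other order.

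\emph{Step 1 (frame).} By the definition of $\hareset$, firing $i$ writes only $\xagent_i$ and $d_i$; every other component is copied. Mode changes are also local: $\dha'$ differs from $\dha$ only in coordinate $i$. So after $i$ fires, $j$ still sits in mode $d_j$, and its own state $\xagent_j$ is untouched.

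\emph{Step 2 (observation invariance).} This is the crux. The quantity that must stay fixed is $\sensor_j(\mathbf{y})\lceil \agentinvariables_j$. By I/O compatibility, $\agentinvariables_j\subseteq \agentobs_j=\bigcup_k \agentobs_{j,k}$. The hypothesis says that the variables $\agentreset_i$ modifies are disjoint from $\agentobs_{j,i}$. For $k\neq i$, Step 1 already shows that the variables in $\agentobs_{j,k}$ are unchanged. The remaining question is whether the sensor output on the observable variables depends only on the observed agents' observable variables. The statement implicitly assumes this: the sensor is a function of what it exposes, so $\sensor_j(\mathbf{y})$ depends on $\mathbf{y}$ only through $\bigcup_k \agentobs_{j,k}$. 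I would make this reading explicit as part of the meaning of $\agentobs_{j,i}$. I expect this to be the main obstacle. A literal reading of the sensor signature lets $\sensor_j$ depend on the full state, so the proof has to fix the convention that $\agentobs_{j,i}$ is exactly the part of agent $i$'s state on which $\sensor_j$ depends.

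The mode variable $d_i$ also changes. This is a second point of care: the hypothesis as stated concerns only variables written by $\agentreset_i$, and the discrete variables $\Dagent_i$ are written by the transition itself. I would therefore read ``the variables $\agentreset_i$ modifies'' as covering the whole written footprint of agent $i$'s transition, including $\Dagent_i$. Alternatively, I would observe that if $\Dagent_i\cap\agentobs_{j,i}\neq\emptyset$ the claim can fail, and note that the intended hypothesis covers it. With this convention, $\sensor_j(\mathbf{y}')\lceil\agentinvariables_j=\sensor_j(\mathbf{y})\lceil\agentinvariables_j$, where $\mathbf{y}'$ is the state after $i$ fires.

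\emph{Step 3 (conclude).} By Step 2, $\agentinval_j$ still lies in $(\sensor_j(\mathbf{y}')\lceil\agentinvariables_j)\cap\agentguard_j(d_j,d'_j)$. Map admissibility for $j$ involves only $d_j$ and $d'_j$, so it is unchanged. Hence $j$ is enabled after $i$ fires, and it resets $\xagent_j$ to $\agentreset_j(d_j,d'_j)(\agentinval_j)$, the same value as in the joint move. Agent $i$'s components were set by the first transition and are not touched by $j$'s transition, by Step 1. The resulting state therefore coincides with the joint-move state. Swapping the roles of $i$ and $j$ gives the other order. Transitivity then shows that both interleavings agree with each other and with the joint move. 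Because $\hareset$ is set-valued, I would phrase the result as equality of reachable sets for each fixed choice of witnesses $\agentinval_i$ and $\agentinval_j$. Taking the union over all witnesses then gives equality of the sets of successor states.
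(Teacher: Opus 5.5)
Your proposal is correct and follows the paper's proof. Firing $i$ writes only agent $i$'s variables, so $\sensor_j$ returns the same observations and $\agent_j$ stays enabled with the same reset value; by symmetry and disjointness of the written variables, both orders then agree with the simultaneous step that applies both resets to the pre-transition state. The paper's proof uses your two explicit conventions silently: it says $i$ ``changes $\dha_i$'' but checks only $\Xagent_i$ against $\agentobs_{j,i}$, even though its own highway logic reads \texttt{other.track\_mode}, and it never states that $\sensor_j$ depends on agent $i$ only through $\agentobs_{j,i}$, so your version, with its per-witness treatment of the set-valued reset, is if anything more careful.
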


\begin{proof}
Let $i$ and $j$ both be enabled at $\langle \xha, \dha\rangle$. Taking $i$ first changes $\dha_i$ and writes only variables of $\Xagent_i$, none of which is in $\agentobs_{j,i}$ by hypothesis. So $\sensor_j$ returns the same observations as before, $\agent_j$'s guard is still enabled, its reset evaluates to the same value, and $\dha_j$ is untouched. Symmetrically with the roles exchanged. Both orders write the same values to disjoint variables, and agree with the simultaneous step, which applies both resets to the pre-transition state.
\end{proof}

When some reset does write a variable another agent observes, $\agent_j$ reads a value that depends on whether $\agent_i$ has already moved, and the two transition orderings differ. \ourtool explores both, so analysis stays sound, but its reachable set is then strictly larger than a simultaneous semantics would give. The hypothesis of Proposition~\ref{prop:interleaving} holds throughout Section~\ref{sec:exp}.
\endgroup

\subsection{Sampled-data control}
\label{sec:sampleddata}
\begingroup\ifrevisionmarkup\color{blue}\fi

A transition of $\ha(\scenario)$ changes an agent's mode, so a controller that reads a sensor and holds its output within one mode is not a self-loop. It can be written in two ways.

The sampling can go inside the flow: $\agentflow_i$ is an arbitrary simulator, so a controller of period $h$ is realized by one that recomputes its input at multiples of $h$ and holds it in between, leaving the agent in one mode with no transitions. Soundness is unaffected, since Theorem~\ref{thm:correspondence} requires only that the trajectories of $\tl$ agree with $\agentflow_i$, and $\postcont$ over-approximates whatever the simulator produces.

Alternatively each sample becomes a transition, via a clock $\tau$ with $\dot\tau = 1$ and a mode variable alternating on $\tau = h$, reset to recompute the control and set $\tau := 0$. The sampling instants are then visible to the analysis, at one transition per sample.

The two modeling styles differ in precision. Sampling inside the flow costs accuracy only when the control is discontinuous in the sampled state, since nearby trajectories then receive different inputs and the fitted discrepancy bound must cover the difference; proportional feedback introduces no such discontinuity, quantizing the actuator does. Making each sample a transition avoids this but re-approximates the reachable set at every sample, which dominates over many periods. Section~\ref{sec:exp:sampled} measures both. The flow formulation is preferable unless the sampling instants are themselves of interest.
\endgroup

\subsection{Verification algorithms in \ourtool}
\label{sec:verification_algorithm}
To perform verification, we first formalize a system's behavior over time. We then present the tree-based reachability algorithm implemented in \ourtool that analyzes these behaviors for safety.

\paragraph{Executions and the Safety Verification Problem}
We formalize a system's behavior as a $\delta$-execution, which is a time-stepped sequence of trajectories. For a given trajectory $\Tha$, we use the shorthand $\Tha.fstate$, $\Tha.lstate$, and $\Tha.ltime$ to denote its initial state $\Tha(0)$, final state $\Tha(T)$, and duration T, respectively. A $\delta$-execution for a hybrid automaton $\ha(\scenario)$ is then a sequence of labeled trajectories $\exec := \langle \Tha^0, \dha^0\rangle,...,\langle\Tha^{m-1}, \dha^{m-1}\rangle$.

This sequence must satisfy three conditions. First, the execution must begin from a valid initial state of the automaton, such that $\Tha^0.fstate \in \Xha^0$ and $\dha^0 \in \Dha^0$. Second, each transition in the sequence must satisfy a valid discrete transition. This means that for each trajectory $\Tha^i$, its endpoint $\Tha^i.lstate$ must satisfy the guard condition $\haguard(\dha^i, \dha^{i+1})$ to enable a jump to the next mode. Consequently, the start point of the next trajectory, $\Tha^{i+1}.fstate$, is determined by the reset map $\hareset(\dha^i, \dha^{i+1})(\Tha^i.lstate)$. Third, each trajectory segment has a fixed duration $\Tha^i.ltime = \delta$, except for the final segment, which can be shorter.

From this definition, we can characterize the set of all reachable states. We define first and last state of an execution $\exec = \langle \Tha^0, \dha^0\rangle,...,\langle\Tha^{m-1}, \dha^{m-1}\rangle$ as $\exec.fstate = \Tha^0.fstate$, $\exec.lstate=\Tha^{m-1}.lstate$ and the first and last mode as $\exec.fmode = \dha^0$ and $\exec.lmode = \dha^{m-1}$. The set of reachable states is defined by $\reach{} := \{\exec.lstate \mid \exec\text{ is an execution of }\ha\}$. In addition, we denote the reachable states in a specific mode $\dha \in \Dha$ as $\reach{}(\dha)$ and $\reach{}(T)$ to be the set of reachable states  at time $T$.

The safety verification problem is then to determine if this reachable set ever intersects with a predefined set of unsafe states. Denoting the unsafe states for a given mode $\dha$ as $\unsafe(\dha)$, the system is considered safe if and only if $\reach{}(\dha) \cap \unsafe(\dha) = \emptyset$ for all modes $\dha \in \Dha$.

\paragraph{Reachability Analysis with Post Operators}
To solve this safety problem, \ourtool's \verify function implements a reachability analysis algorithm that constructs an execution tree. The algorithm relies on two fundamental operators that compute the set of states reachable after a discrete jump or a continuous flow. The discrete post-operator, $\post_{\dha,\dha'}$, calculates the result of a transition between modes $\dha$ and $\dha'$, defined as $\post_{\dha,\dha'}(\xha) = \xha'$ iff $\xha \in \haguard(\dha,\dha')$ and $\xha'=\hareset(\dha,\dha')(\xha)$. The continuous post-operator, $\post_{\dha,\delta}$, computes the result of continuous evolution within mode $\dha$ for a duration $\delta$. More specifically, $\post_{\dha,\delta}(\xha) = \xha'$ iff $\forall i\in {1,...,k}$, $\xha_i'=\agentflow_i(\xha_i, \dha_i, \delta)$. These operators are lifted to operate on sets of states in the usual way.

The algorithm begins at the root of the execution tree, $\langle \Xha^0, \dha^0\rangle$, and explores reachable states in a breadth-first manner. An edge is created from a vertex $\langle \Xhaset, \dha \rangle$ to a new vertex $\langle \Xhaset', \dha'\rangle$
if $\langle \Xhaset, \dha \rangle$ to $\langle \Xhaset', \dha'\rangle$, iff $\Xhaset' = \post_{\dha,\dha'}(\post_{\dha,\delta}(\Xhaset))$. 
This follows the "flow-then-jump" sequence that mirrors the formal definition of an execution. 
During the construction of this tree, the safety conditions are checked at each new vertex. In practice, the functions \postcont and \postdisc compute over-approximations of these sets to ensure soundness, using an approach based on the algorithms in~\cite{dryvr}.

\section{Experiments and Use Cases}
\label{sec:exp}
Our experimental evaluation is designed to demonstrate both the unique capabilities and the comparative performance of \ourtool. The evaluation proceeds in two parts. First, we present a series of examples to highlight the novel features of our formalism, particularly its expressive power in modeling complex, multi-agent scenarios. This section serves to qualitatively validate the unique advantages of our approach. Second, we quantitatively benchmark \ourtool against two state-of-the-art hybrid systems verification tools, CORA~\cite{Althoff2015a} and JuliaReach~\cite{JuliaReach19}. This comparison, conducted on a suite of standard benchmarks, establishes the computational efficiency and scalability of our tool relative to existing methods.
\rev{Unless noted otherwise, every measurement reported in this section was taken on an Apple~M2 with 8 cores and 24\,GB of memory running macOS, single-threaded.}

\subsection{Important Features of \ourtool}
To showcase the key features and algorithms of \ourtool, our initial experiments utilize a diverse set of scenarios. These scenarios are constructed from two primary agent models: a 4-dimensional ground vehicle with bicycle dynamics managed by a Stanley controller~\cite{4282788}, and a 6-dimensional drone guided by a neural network controller~\cite{ivanov2019verisig}. Furthermore, each agent can be equipped with one of two distinct decision logics. The first is a reactive collision avoidance (CA) logic that directs an agent to switch tracks upon detecting another agent ahead, while the second is a simpler non-player vehicle (NPV) logic where the agent follows its track at a constant speed without reacting to others. We denote these combinations with a shorthand, such as C-CA for a car with collision avoidance or D-NPV for a drone with non-player logic. These agents operate within a varied collection of six maps, including four 2D environments ($\map1$-$\map4$) and two 3D environments ($\map5$-$\map6$). The 2D maps feature configurations of three and five parallel straight tracks ($\map1$ and $\map2$), a set of three parallel curved tracks ($\map3$), and a complex road network imported from OpenDRIVE ($\map4$). The 3D environments are a pair of stacked straight corridors ($\map5$) and a figure-8 circuit ($\map6$), in which the reference tracks cross over themselves in the vertical dimension.
 
\paragraph{Safety analysis with multiple drones in a 3-d map.}
The first example is a scenario with two drones---D-CA agent (red) and D-NPV agent (blue)---in map $\map5$. The safety assertion requires agents to always separate by at least 1m. 
Fig.~\ref{fig:exp:simple_2} ({\em left}) shows the  computed reachable set, its projection on $x$-position, and on $z$ position. Since the agents are separated in space-time, the scenario is verified safe. These plots are generated using \ourtool's plotting functions. 

\begin{figure}[!htb]
        \centering
        \includegraphics[width=0.24\textwidth]{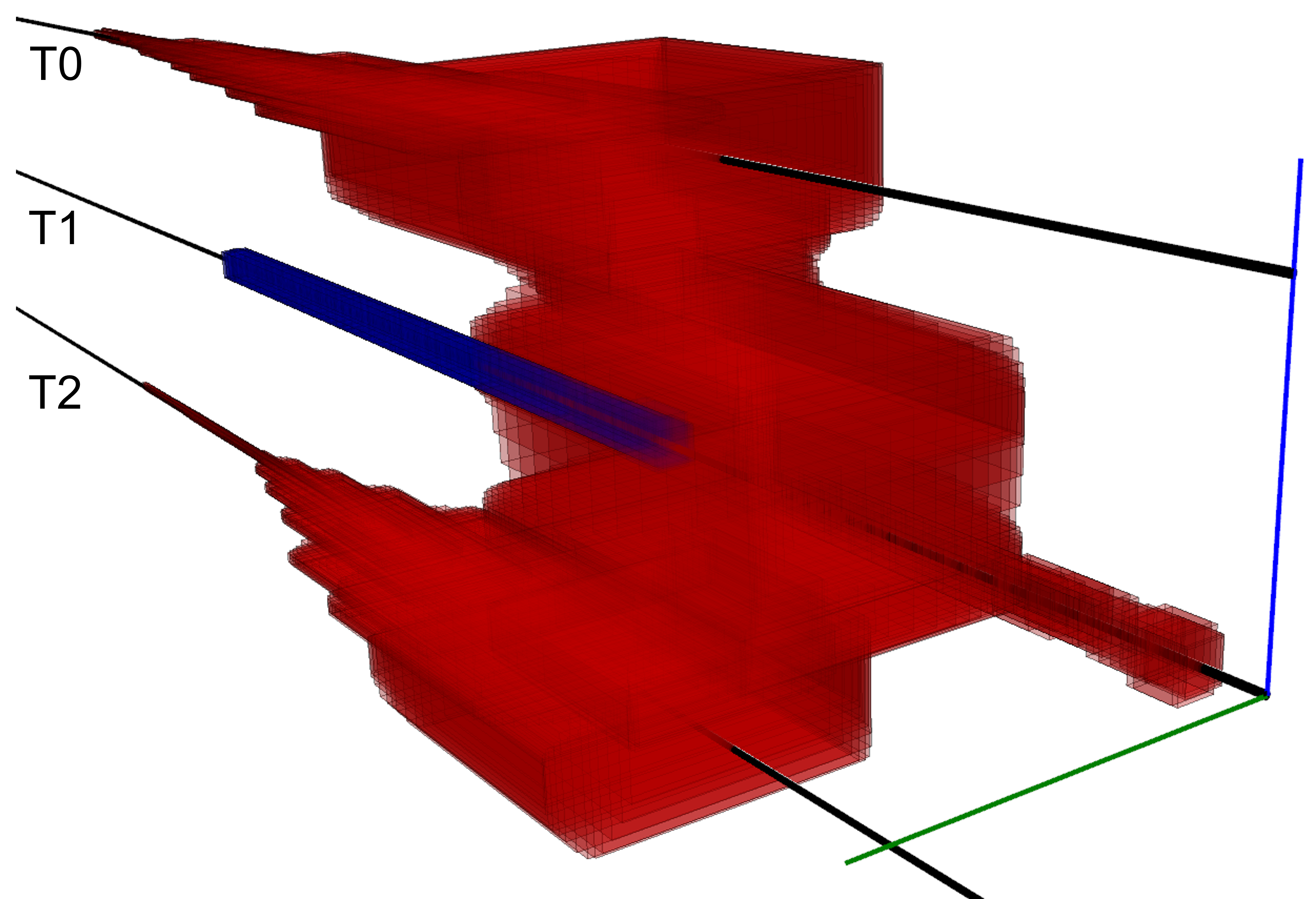}
        \includegraphics[width=0.24\textwidth]{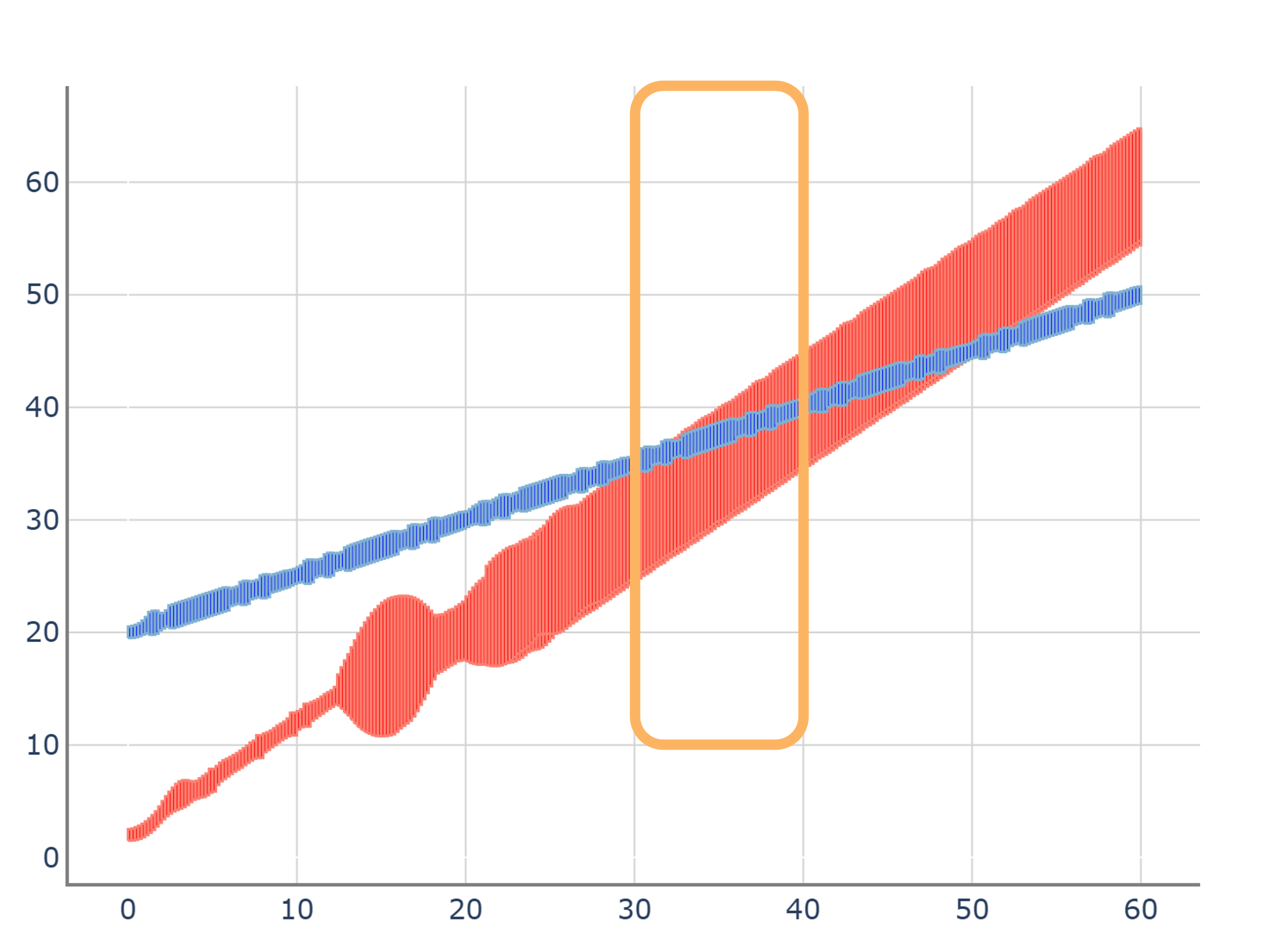}
        \includegraphics[width=0.24\textwidth]{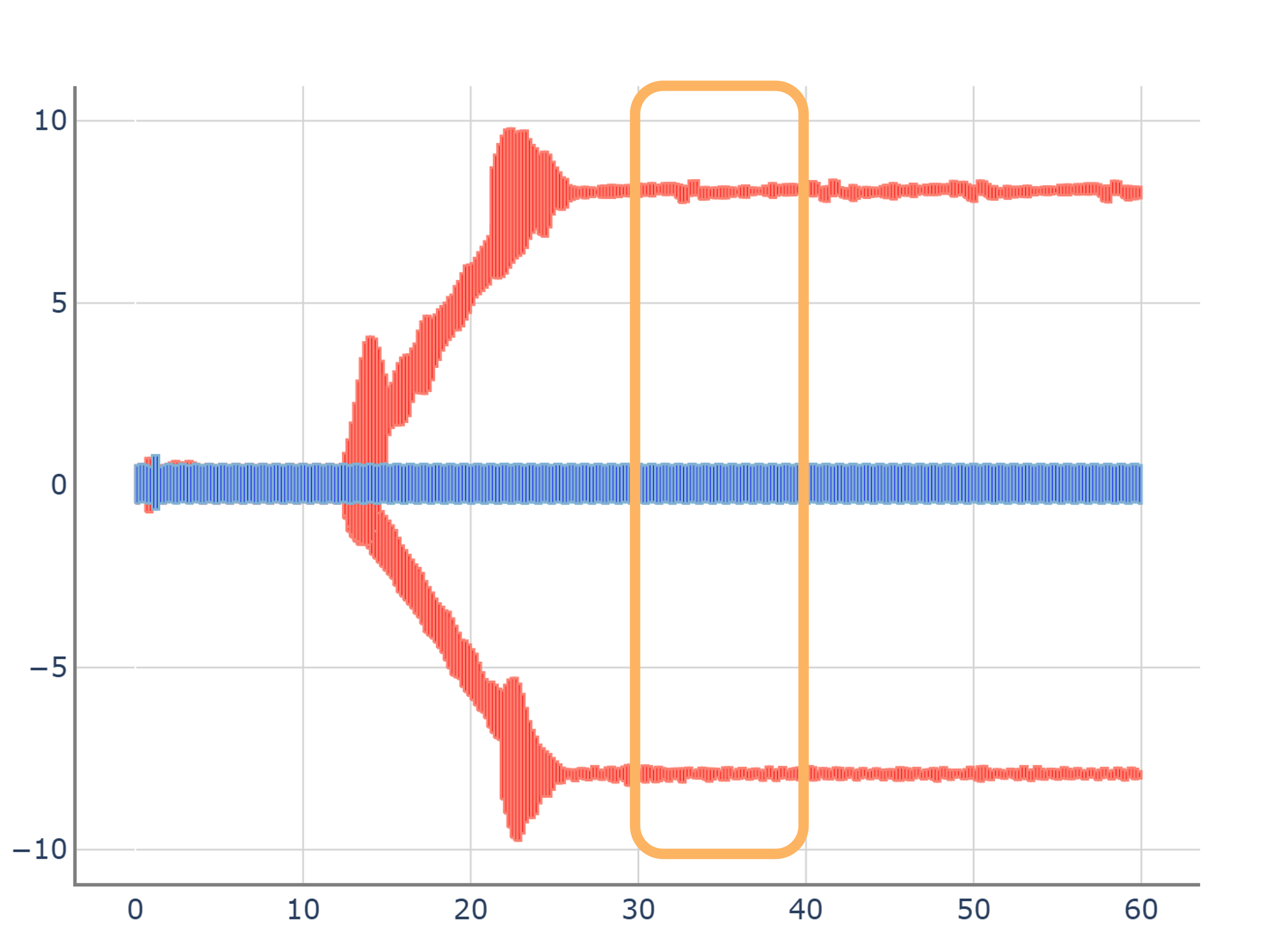}
        \includegraphics[width=0.24\textwidth]{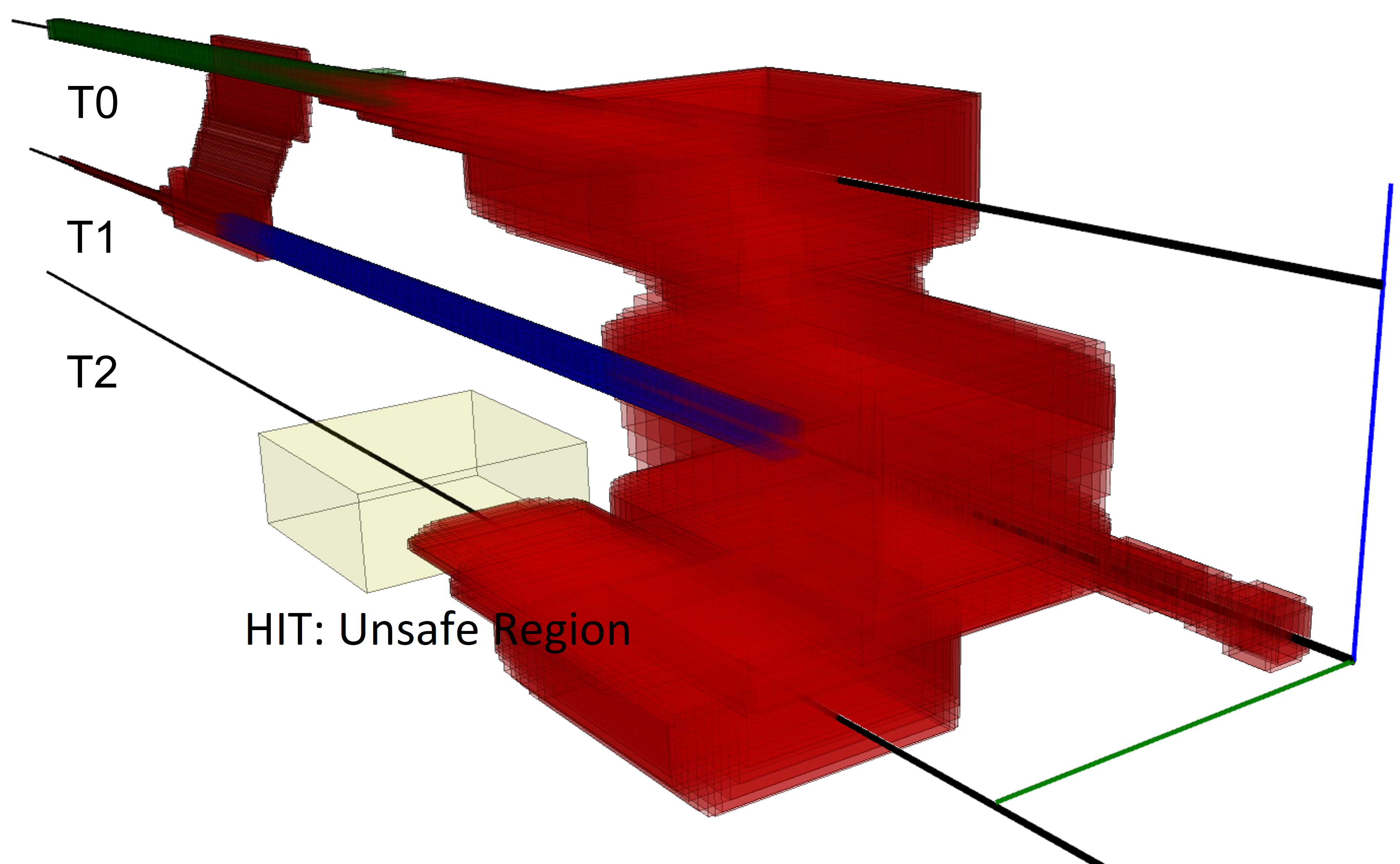}
        \caption{\small Left to right: (1) Computed reachtubes for a 2-drone scenario; (2) same reachtube projected on  x-dimension, and (3) on z-dimension. Since there is no overlap in space-time, no collision. (4) Reachtube for a 3-drone scenario, the red drone violates the safety condition by entering the unsafe region after moving downward.}\label{fig:exp:simple_2}
\end{figure}

\paragraph{Checking multiple safety assertions.} 
\ourtool supports multiple safety assertions specified using $\texttt{assert}$ statements. 
For example, the user can specify unsafe regions (Line 77-78) or safe separation between agents (Line 79-82) as shown in Fig.~\ref{code:unsafe}.
We add a second D-NPV to the previous scenario and both safety assertions. The result is shown in the rightmost Fig.~\ref{fig:exp:simple_2}. In this scenario, D-CA violates the safety property by entering the unsafe region after moving downward to avoid collision.
The behavior of D-CA after moving upward is not influenced. There is no violation of safe separation. \ourtool allow users to extract the set of reachable states and mode transitions that leads to a safety violation. 

\begin{figure}
    \centering
    \scriptsize
\begin{lstlisting}[backgroundcolor=\color{bg},xleftmargin=20pt,
firstline=77, lastline=82, firstnumber=77]
    assert not (ego.x > 40 and ego.x < 50 and \
        ego.y > -5 and ego.y < 5 and ego.z > -10 and ego.z < -6), "Unsafe Region"
    assert not any(ego.x-other.x < 1 and ego.x-other.x > -1 and \
        ego.y-other.y < 1 and ego.y-other.y > -1 and \ 
        ego.z-other.z < 1 and ego.z-other.z > -1 \
        for other in others), "Safe Separation"
\end{lstlisting}
\caption{\small Safety assertions for three drone scenario.}
\label{code:unsafe}
\end{figure}

\paragraph{Changing maps.} \ourtool allows users to easily create scenarios with different maps and port agents across compatible maps. We start with a  scenario with one C-CA agent (red) and two C-NPV agents (blue, green) in $\map1$. The safety assertion  is that the vehicles should be at least 1m apart in both $x$ and $y$-dimensions.  Fig.~\ref{fig:exp:map_safe} ({\em left\/}) shows the  verification result  and safety is not violated. 
However, if we switch to map $\map3$ by changing one line in the scenario definition, a reachability analysis shows that a safety violation can happen after C-CA merges left  Fig.~\ref{fig:exp:map_safe} ({\em center}). 
In addition, \ourtool allows importing map from OpenDRIVE \cite{noauthor_asam_nodate} format. An example is shown in Fig.~\ref{fig:exp:additional} ({\em left}).

\begin{figure}
    \centering
    \includegraphics[width=0.32\textwidth]{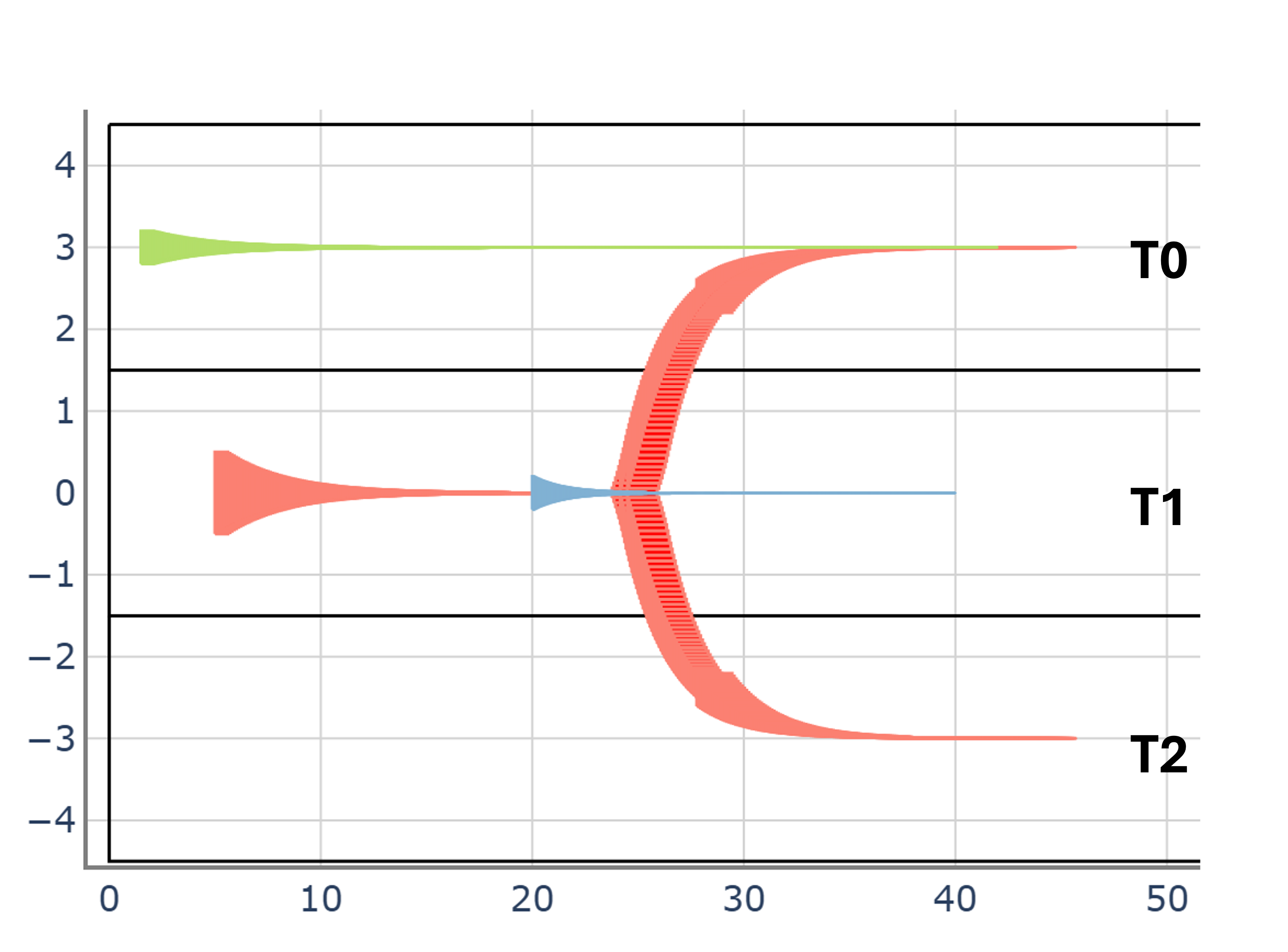}
    \includegraphics[width=0.32\textwidth]{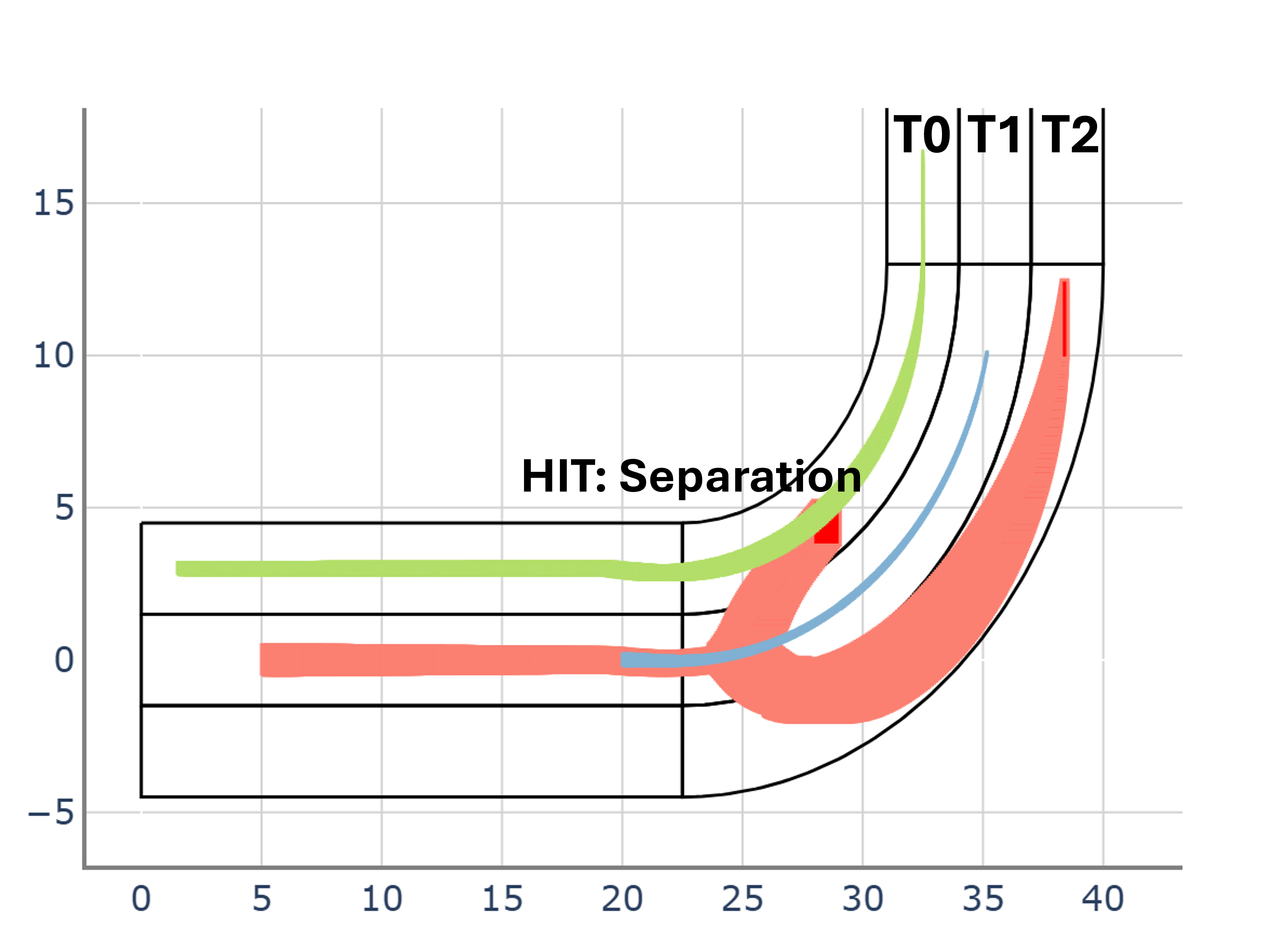}
    \includegraphics[width=0.32\textwidth]{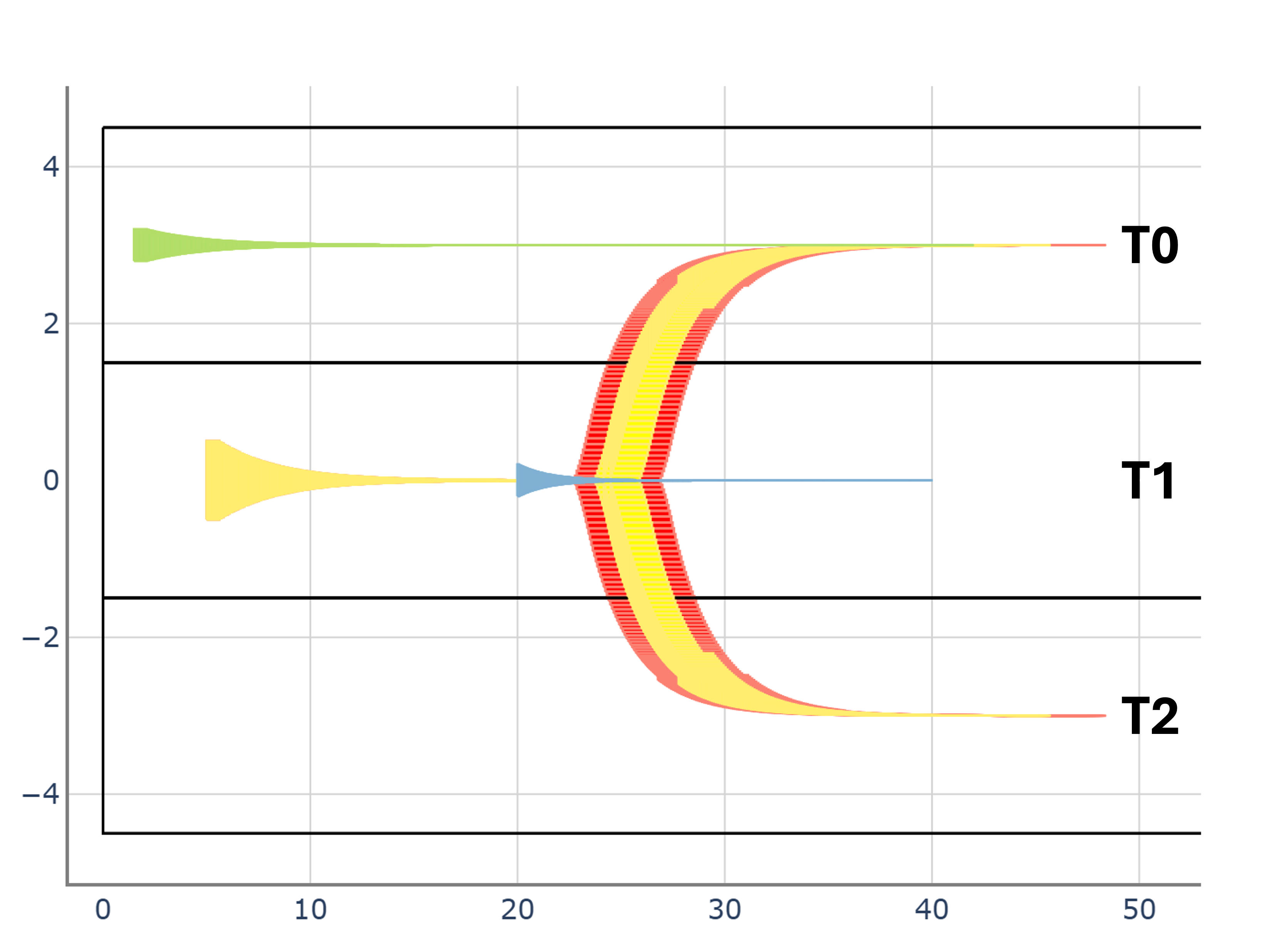}
    \caption{\small \textit{Left:} running the three car scenario on map with parallel straight lanes.  \textit{Center:} same scenario with a curved map. \textit{Right:} same scenario with a noisy sensor.} 
    \label{fig:exp:map_safe}
\end{figure}

\paragraph{Adding noisy sensors.} \ourtool supports scenarios with different sensor functions. For example,  the user can  create a noisy sensor function that mimics a realistic sensor with bounded noise. Such sensor functions are easily added to the scenario using the $\texttt{set\_sensor}$ function.

Fig.~\ref{fig:exp:map_safe} ({\em right}) shows exactly the same three-car scenario with a noisy sensor, which adds $\pm0.5 m$ noise to the perceived position of all other vehicles. 
Since the sensed values of other agents only impacts the checking of the guards (and hence the transitions) of the agents, \ourtool internally bloats the reachable set of positions for the other agents by $\pm0.5$ while checking guards.
Compared with the behavior of the same agent with no sensor noise (shown in yellow in Fig~\ref{fig:exp:map_safe} ({\em right})), the sensor noise enlarges the region over which  the transition can happen, causes enlarged  reachtubes for the red agent.  

\paragraph{Plugging in different reachability engines.}
With a little effort, \ourtool allows users to plug in different reachability tools for the \postcont computation. The user will need to modify the interface of the reachability tool so that given a set of initial states, a mode, and a non negative value $\delta$, the reachability tool can output the set of reachable states over a $\delta$-period represented by a set of timed hyperrectangles.
Currently, \ourtool implements computing $\postcont$ using DryVR~\cite{dryvr}, NeuReach~\cite{neuReach-SunM22} and Mixed Monotone Decomposition~\cite{9304391}. 
A scenario with two car agents in map $\map1$ verified using NeuReach and DryVR is shown in Fig.~\ref{fig:exp:additional} ({\em center}) and ({\em right}).

\begin{figure}
    \centering
    \includegraphics[width=0.32\textwidth]{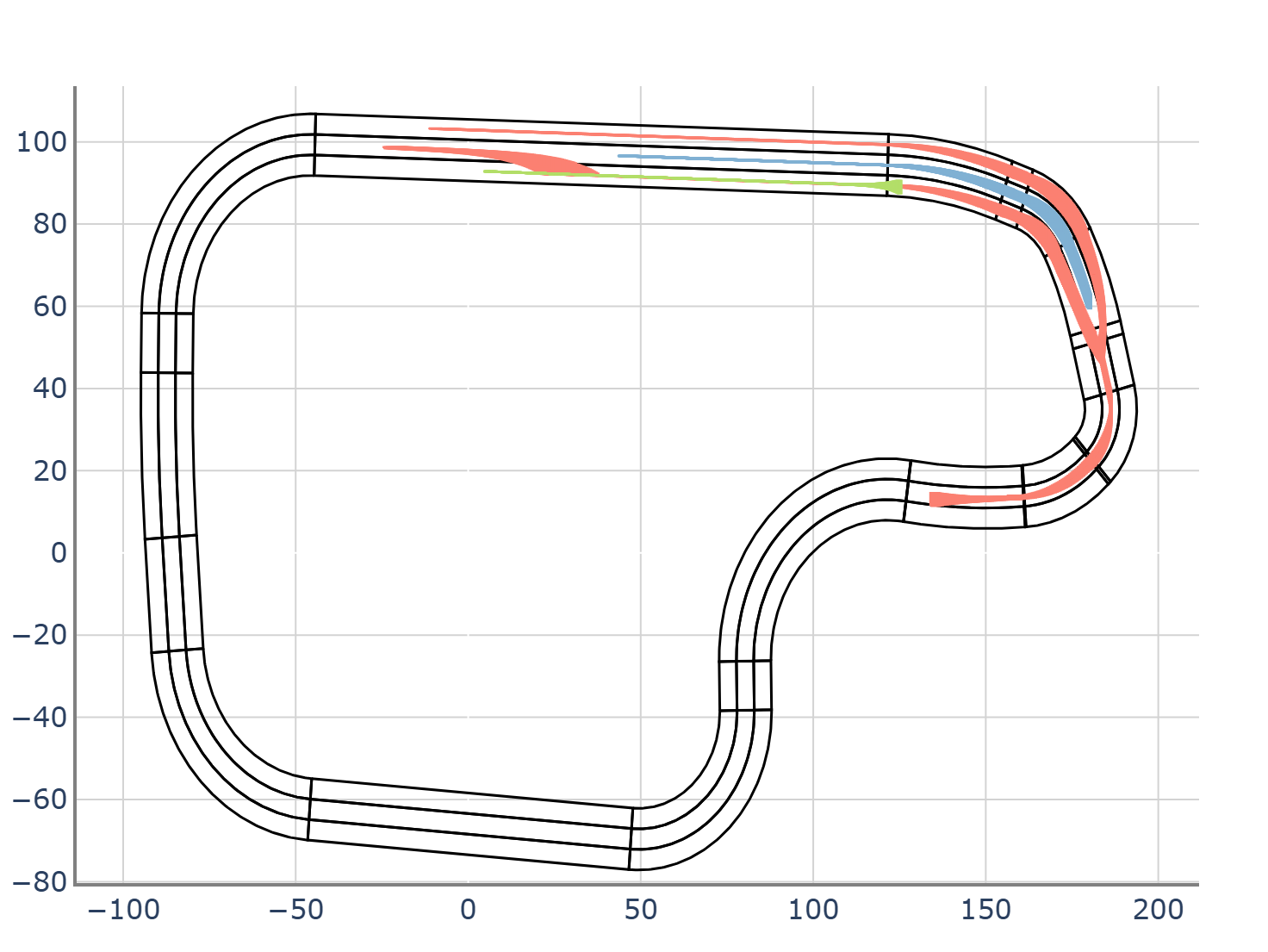}
    \includegraphics[width=0.32\textwidth]{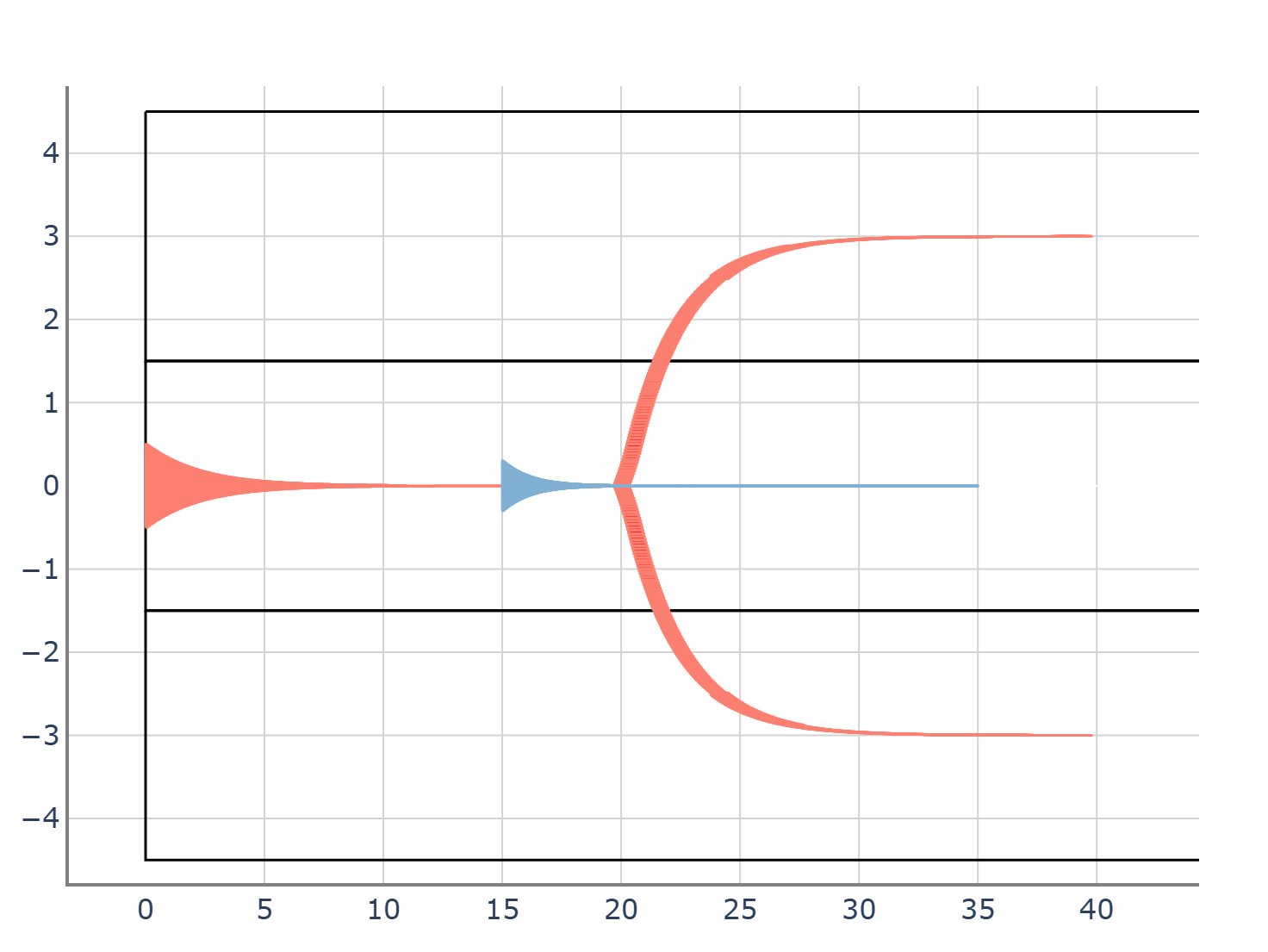}
    \includegraphics[width=0.32\textwidth]{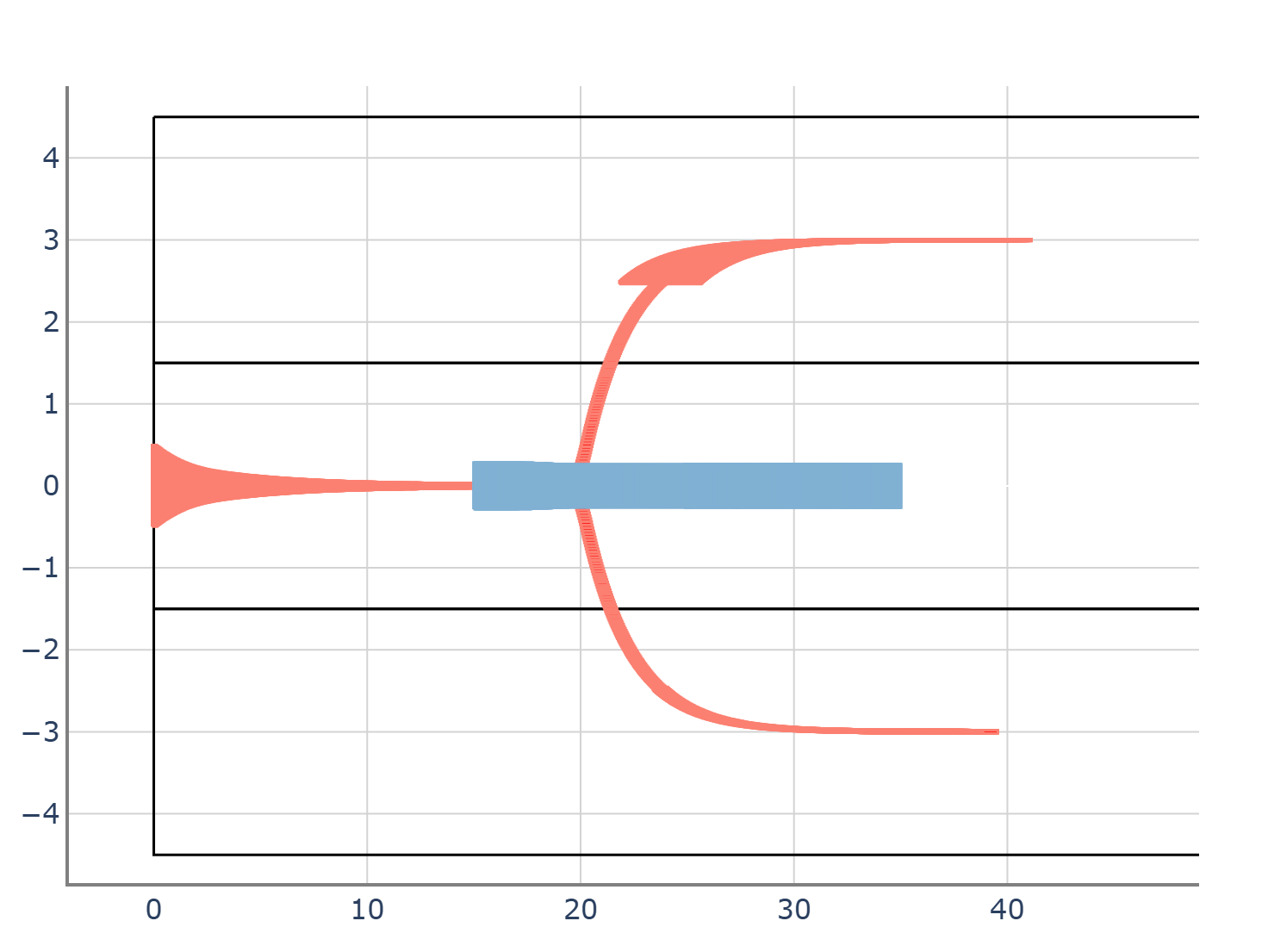}
    \caption{\small \textit{Left:} running the three car scenario on a race track map imported from OpenDrive format.  \textit{Center:} two car scenario verified with DryVR backend. \textit{Right:} same two car scenario verified with NeuReach backend.} 
    \label{fig:exp:additional}
\end{figure}


Table~\ref{tab:exps} reports verification of the examples of this section, together with two standard benchmarks: van-der-pol (Agent V)~\cite{ARCH20:ARCH_COMP20_Category_Report_Continuous_nln} and spacecraft rendezvous (Agent S)~\cite{ARCH20:ARCH_COMP20_Category_Report_Continuous_nln}. Every row was run on one machine with peak memory recorded, and the table extended across the agent $\times$ map $\times$ engine $\times$ sensor grid.

Running time grows with the number of discrete states explored, and the seven-agent scenario on $\map2$, with $37$ states, finishes in $90$\,s. Memory usage is almost flat, $356$--$515$\,MB, drones scenarios use about $100$\,MB above cars.

The choice of reachability engine changes more than the running time. NeuReach\footnote{Runtime for NeuReach includes training time.} is one to two orders of magnitude slower than DryVR, since it trains a network per mode, and a looser tube satisfies guards a tighter one does not: on $\map4$ the three-car scenario explores $7$ states under DryVR and $49$ under NeuReach. Sensing does the same. $\map3$ is unsafe and \ourtool stops at the first failing assertion, which an exact sensor reaches after four states and $\pm0.5$\,m of noise in the first, so that row reports one state rather than a cheaper analysis. Neither \#Tr nor Rt reads without the verdict beside it.

\begin{table}[!htp]
\centering
\ifrevisionmarkup\color{blue}\fi
\setlength{\tabcolsep}{3.5pt}
\caption{\scriptsize \rev{Verification of the examples of Section~\ref{sec:exp}.
Columns: number of agents (\#$\agent$), agent type ($\agent$) --- C (car), D (drone),
V (van der Pol), S (spacecraft), G (gearbox) --- map, reachability engine
($\postcont$), whether the sensor is noisy, number of discrete states explored (\#Tr),
runtime, and peak resident memory. Rows marked $\dagger$ are new in this article and
fill out the agent $\times$ map $\times$ engine $\times$ sensor grid. The row marked
$\ddagger$ was run from the repository branch that produced the conference version,
which carries a controller file absent from the current one. Runtimes are not comparable with those of the
conference version, which used different hardware, though \#Tr reproduces on every row.
Timing is taken around verification itself.}}
\label{tab:exps}
\footnotesize
\begin{minipage}[t]{0.49\linewidth}\centering
\begin{tabular}{r c c l c r r r}
\toprule
\makecell{\#$\agent$} & \makecell{$\agent$} & \makecell{Map} & \makecell{$\postcont$} & \makecell{Noisy\\$\sensor$} & \makecell{\#Tr} & \makecell{Rt\\(s)} & \makecell{Mem\\(MB)} \\
\midrule
2 & C & M1 & DryVR & No & 5 & 3.80 & 360 \\
3 & C & M1 & DryVR & No & 5 & 6.22 & 364 \\
3 & C & M1 & DryVR & Yes & 5 & 7.78 & 359 \\
2 & C & M1 & NeuReach & No & 5 & 204.28 & 515 \\
7 & C & M2 & DryVR & No & 37 & 90.41 & 387 \\
3 & C & M3 & DryVR & No & 4 & 9.15 & 357 \\
3 & C & M4 & DryVR & No & 7 & 41.48 & 361 \\
2 & D & M5 & DryVR & No & 5 & 5.72 & 465 \\
3 & D & M5 & DryVR & No & 7 & 12.16 & 463 \\
2 & D & M5 & NeuReach & No & 5 & 161.03 & 447 \\
\bottomrule
\end{tabular}
\end{minipage}%
\hfill
\begin{minipage}[t]{0.49\linewidth}\centering
\begin{tabular}{r c c l c r r r}
\toprule
\makecell{\#$\agent$} & \makecell{$\agent$} & \makecell{Map} & \makecell{$\postcont$} & \makecell{Noisy\\$\sensor$} & \makecell{\#Tr} & \makecell{Rt\\(s)} & \makecell{Mem\\(MB)} \\
\midrule
2 & D & M6 & DryVR & No & 8 & 16.79 & 468 \\
1 & G$^{\ddagger}$ & N/A & DryVR & N/A & 3 & 5.75 & 386 \\
1 & S & N/A & DryVR & N/A & 3 & 3.88 & 356 \\
1 & V & N/A & DryVR & N/A & 1 & 0.04 & 378 \\
\midrule
3 & C$^{\dagger}$ & M3 & DryVR & Yes & 1 & 3.46 & 364 \\
3 & C$^{\dagger}$ & M3 & NeuReach & No & 1 & 79.62 & 436 \\
3 & C$^{\dagger}$ & M4 & DryVR & Yes & 7 & 47.59 & 365 \\
3 & C$^{\dagger}$ & M4 & NeuReach & No & 49 & 2003.20 & 439 \\
2 & D$^{\dagger}$ & M6 & NeuReach & No & 7 & 412.87 & 476 \\
\bottomrule
\end{tabular}
\end{minipage}
\end{table}

\subsection{Comparison with Other Tools}
In the second part of our evaluation, we quantitatively benchmark \ourtool against two state-of-the-art hybrid systems verification tools, CORA~\cite{Althoff2015a} and JuliaReach~\cite{JuliaReach19}. 
\rev{This comparison is conducted on six standard benchmarks from the ARCH competition~\cite{ARCH23:ARCH_COMP23_Category_Report}, spanning the system classes the formalism is meant to cover. The suite comprises a high-dimensional linear continuous system (ISSC01), a linear hybrid system (GBX01), two nonlinear continuous systems (CVDP23 and LALO20), and two nonlinear hybrid systems (LOVO21 and SPRE22). Dimensions are reported in Table~\ref{tab:comparison}.}

Our assessment focuses on two key aspects: the conservatism of the computed reachable sets and the overall runtime.
\rev{To quantify conservatism we use two size ratios. For a bounded set $\mathcal{S} \subseteq \reals^n$ write $\|\mathcal{S}\| = \sum_{i=1}^{n}\left(\sup_{x \in \mathcal{S}} x_i - \inf_{x \in \mathcal{S}} x_i\right)$ for the $L_1$ norm of the diagonal of its bounding box. The Final-to-Initial ratio (F/I) is $\|\reach{}(T)\| / \|\Xha^0\|$, measuring precision at the end of the horizon, and the Average-to-Initial ratio (A/I) is $\frac{1}{k}\sum_{j=1}^{k}\|\reach{}(t_j)\| / \|\Xha^0\|$ over the $k$ reach sets of the run, measuring overall tightness. We use the bounding-box diagonal rather than volume because the three tools represent reach sets differently---\ourtool by hyperrectangles, CORA by zonotopes, JuliaReach by zonotopes and Taylor models---and the box diagonal is computable from each representation without further over-approximation, so the columns are comparable across tools.} The results of this comparison, detailing the conservatism metrics and runtimes for each tool, are presented in Table~\ref{tab:comparison}.

\begin{table}[!htp]
\centering
\setlength{\tabcolsep}{2.5pt} 
\caption{\scriptsize Comparison of \ourtool with CORA and JuliaReach on standard ARCH benchmarks (e.g., ISSC01, \rev{CVDP23}), which are detailed in Sec.~\ref{sec:exp}. 
The table reports benchmark properties (Dim: dimension, NLN: nonlinear, Hybrid: hybrid system) and performance metrics: Final-to-Initial volume ratio (F/I), Average-to-Initial volume ratio (A/I), and runtime in seconds (Rt). 
The key takeaway is that \ourtool is highly competitive, outperforming CORA and JuliaReach in speed and precision on several nonlinear benchmarks (e.g., \rev{CVDP23}, LALO20).
\rev{The runs in this table are from the conference version and were produced on an AMD~Ryzen~7~5800X with 32\,GB running Windows~10, so their runtimes are not comparable with those reported elsewhere in this section.
\ourtool's A/I entries for the three hybrid benchmarks (GBX01, LOVO21, SPRE22) were recomputed after we found that the original script averaged over the final node of the execution tree rather than over the whole tree; the recomputation reproduces the published F/I values, and the corrected A/I is lower in each case. The single-mode rows are unaffected. Absolute values for SPRE22 vary within the sampling noise of the DryVR engine.}}
\label{tab:comparison} 
\begin{tabular}{l ccc ccc ccc ccc} 
\toprule
\multirow{2}{*}{Benchmark} & \multirow{2}{*}{Dim} & \multirow{2}{*}{NLN} & \multirow{2}{*}{Hybrid} &
\multicolumn{3}{c}{\ourtool} & 
\multicolumn{3}{c}{CORA} & 
\multicolumn{3}{c}{JuliaReach} \\
\cmidrule(lr){5-7} \cmidrule(lr){8-10} \cmidrule(lr){11-13}
 & & & & F/I & A/I & Rt & F/I & A/I & Rt & F/I & A/I & Rt \\
\midrule
ISSC01 & \rev{273} & n & n& 1.518 & 1.926 & 10.84 & 0.651 & 1.255 & 23.73 & 0.834 & 1.067 & 42.05\\
GBX01 & 4 & n & y & 1915.321 & \rev{883.5} & 10.59 & 17.596 & 8.812 & 2.34 & 2.180 & 1.391 & 9.66\\
\rev{CVDP23} & 4 & y & n & 1.075& 0.943& 0.63 & 0.498 & 0.620 & 296.82 & 1.006& 0.941& 8.44 \\
LALO20 & 7 & y & n & 0.028 & 0.383& 0.858 & 0.100& 0.670& 154.02 & 0.051 & 0.570 & 13.42\\ 
LOVO21 & 2 & y & y & 15.114& \rev{10.79}& 5.135 & 19.382 & 8.915 & 17.49 & 1.143 & 1.899 & 3.45 \\
SPRE22 & 4 & y & y & 1.357 & \rev{0.39} & 36.14 & 1.431 & 0.303 & 20.98  &1.104& 0.190 & 117.16 \\
\bottomrule
\end{tabular}
\end{table}

The comparison results in Table~\ref{tab:comparison} first demonstrate the broad applicability of \ourtool, which \rev{solved all six benchmarks}. This suite includes high-dimensional linear systems (ISSC01), linear hybrid systems (GBX01), and a variety of nonlinear continuous (\rev{CVDP23}, LALO20) and nonlinear hybrid (LOVO21, SPRE22) problems.

\rev{\ourtool's advantage is in runtime, and it is largest on the nonlinear benchmarks. On \rev{CVDP23} it finishes in $0.63$\,s against CORA's $296.82$\,s, and on LALO20 in $0.858$\,s against $154.02$\,s; in the high-dimensional linear case it completes ISSC01 more than twice as fast as CORA and four times faster than JuliaReach.}

\rev{Precision is the other side of the trade. \ourtool computes the tightest reachable set of the three tools on LALO20, but on the remaining five benchmarks JuliaReach is tighter, and on ISSC01 and GBX01 CORA is tighter as well. The gap is widest on GBX01, where \ourtool's F/I ratio is $1915.3$ against CORA's $17.6$ and JuliaReach's $2.18$. The cause is representational: \ourtool propagates hyperrectangles, which carry no dependency between state variables, whereas CORA and JuliaReach use zonotopes and Taylor models that do. Each discrete transition in a hybrid benchmark re-approximates the guard intersection by an axis-aligned box, so the error compounds across transitions; GBX01, whose impact resets are frequent, is the worst case. The same mechanism produces the degrading separation bound reported in Section~\ref{sec:exp:multiagent}, and it is what motivates the star-set representation discussed in Section~\ref{sec:conclusion}.}

\subsection{\rev{Sampled-data control}}
\label{sec:exp:sampled}
\begingroup
\ifrevisionmarkup\color{blue}\fi
\rev{Section~\ref{sec:sampleddata} describes two ways of writing a digital controller that samples and holds within a single mode. We measure both on a second-order plant,
\[ \dot{x} = v, \qquad \dot{v} = -K(x_s - r) - Dv, \]
with $K=4$, $D=1$, $r=1$, under a controller running at $100$\,Hz, where $x_s$ is the position as sampled and held over each period. The initial set is $x \in [0, 0.05]$, $v = 0$, the horizon is $1$\,s, and $\delta = 0.002$. We run the plant twice: once with the control applied directly, which makes it a continuous function of the sampled state, and once with the actuator quantized to steps of $0.05$, which makes it a step function of that state. Table~\ref{tab:sampled} reports all cases, with the F/I and A/I ratios taken over $\langle x, v\rangle$.}

\begin{table}[!htp]
\centering
\ifrevisionmarkup\color{blue}\fi
\caption{\scriptsize \rev{A $100$\,Hz controller written with the sampling inside the flow and as an explicit transition per sample, under a continuous and a quantized control law.}}
\label{tab:sampled}
\footnotesize
\begin{tabular}{l l r r r r r}
\toprule
Formulation & Control & Modes & \#Tr & Rt (s) & F/I & A/I \\
\midrule
sampling in $\agentflow_i$ & continuous & 1 & 0   & 0.65  & 1.21   & 1.52 \\
transition per sample      & continuous & 2 & 124 & 10.33 & 105.02 & 38.55 \\
\midrule
sampling in $\agentflow_i$ & quantized  & 1 & 0   & 0.62  & 3.46   & 3.09 \\
transition per sample      & quantized  & 2 & 124 & 10.00 & 127.55 & 46.32 \\
\bottomrule
\end{tabular}
\end{table}

\rev{Quantizing the actuator costs the flow formulation a factor of $2.9$ in F/I, from $1.21$ to $3.46$. This is the penalty for hiding a discontinuity inside $\agentflow_i$: the discrepancy bound must cover the difference between trajectories that fall on opposite sides of a quantization threshold, and it is charged at every sample whether or not the trajectories actually separate. With the continuous control law there is no such threshold and the penalty does not arise.}

\rev{Making each sample a transition costs a factor of $87$ in F/I under the same continuous control law, and is $16$ times slower. Each of the $124$ transitions re-approximates the reachable set, and over a horizon of a hundred periods that accumulation is an order of magnitude larger than the discontinuity penalty it avoids. The ordering is the same for the quantized law. A controller of this kind should therefore be written inside $\agentflow_i$, and the clock formulation reserved for cases where the sampling instants must be reasoned about rather than merely simulated.}
\endgroup

\subsection{\rev{Multi-agent Comparison with JuliaReach}}
\label{sec:exp:multiagent}
\begingroup
\ifrevisionmarkup\color{blue}\fi

We  compare \ourtool with JuliaReach~\cite{JuliaReach19} on two scalable scenarios in which the discrete mode space, rather than the continuous dimension, is what grows. Such a comparison is constrained by what both tools express: \ourtool admits a black-box simulator for $\agentflow_i$, whereas set-propagation tools require symbolic dynamics, so both scenarios keep each agent's flow in closed form.

\paragraph{Benchmark A: braking platoon.}
$N$ vehicles travel in a line, each with longitudinal state $\langle p_i, v_i\rangle$ and three modes: \texttt{Cruise} ($\dot v = 0$), \texttt{Brake} ($\dot v = -2$), and \texttt{Stopped}, entered with the reset $v := 0$. A vehicle brakes when {\em some\/} vehicle ahead lies within $18$\,m, so the guard is an existential quantifier over the other agents. The lead vehicle brakes at time zero and the deceleration propagates down the line as a wave (Fig.~\ref{fig:exp:platoon:wave}). The assertion requires that no vehicle come within $1$\,m of the one ahead. The mode space is $3^N$; no map is involved.

\paragraph{Benchmark B: lane changes on a map.}
Three straight lanes ($\map1$), with $N$ car agents using the bicycle dynamics and Stanley controller of Section~\ref{sec:exp}. A car closing to within $3$ to $5$\,m of a slower car ahead changes lane, its track mode passing through $\mathsf{M_{10}}$ or $\mathsf{M_{12}}$. This benchmark exercises the map abstraction: agent mode and track mode together select the reference path, and admissible changes are those $\maplablab$ permits. Speeds are set so a lane change commands $\arctan(0.45\cdot 3/v) = 0.73$\,rad against the controller's $0.61$\,rad steering limit, so the limit is active during every maneuver.

\paragraph{Method.}
\begin{wrapfigure}[12]{r}{0.38\linewidth}
    \centering
    \ifrevisionmarkup\color{blue}\fi
    \vspace{-1.1\intextsep}
    \includegraphics[width=\linewidth]{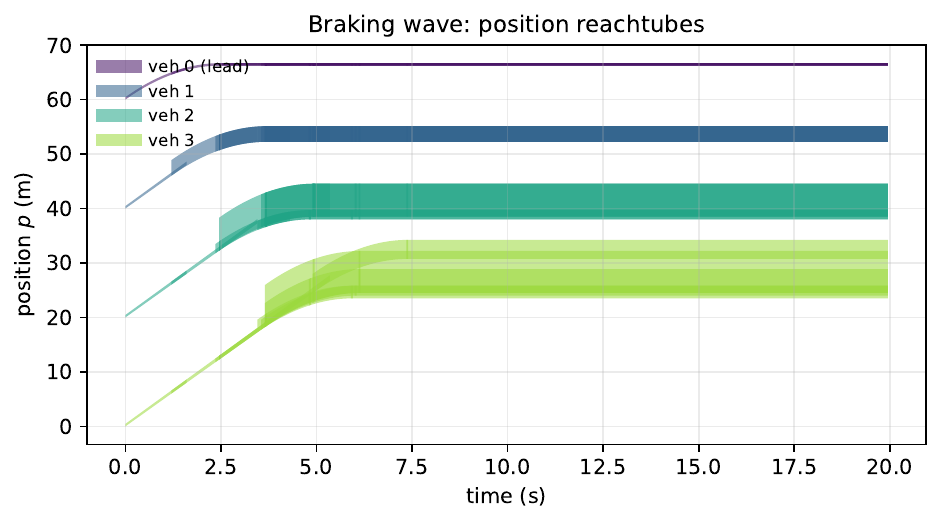}
    \caption{\rev{\small Position reachtubes for the four-vehicle platoon: the vehicles come to rest in sequence, and the tube widens down the line as the uncertainty in each braking time accumulates.}}
    \label{fig:exp:platoon:wave}
\end{wrapfigure}
Both tools are built natively, JuliaReach's compilation time is excluded, and safety checking is inside the timed region for both. JuliaReach is handed the flattened automaton, generated programmatically rather than written by hand, so the comparison measures the size of the object that must exist rather than the effort of typing it. Both tools verify the same assertion and agree on every row of Table~\ref{tab:multiagent}: all scenarios are safe. For Benchmark A we report the tightest separation the analysis certifies between consecutive vehicles, whose true value $11.5$ to $12.9$\,m is known analytically; the ratios of Table~\ref{tab:comparison} are not meaningful here, as the reachable set is a union over many mode branches.

\begin{table}[!htp]
\centering
\ifrevisionmarkup\color{blue}\fi
\setlength{\tabcolsep}{3pt}
\caption{\scriptsize Comparison of \ourtool and JuliaReach on multi-agent scenarios. Columns are: number of agents ($N$), continuous dimension (Dim), and for each tool the discrete states it materializes, runtime (Rt), peak memory (Mem), and the tightest certified separation (Sep, metres; true value $11.5$--$12.9$, larger is tighter). For \ourtool, ``States'' is the number of mode tuples the analysis visits; for JuliaReach it is the number of locations the flattened automaton must contain, with the transition count alongside. Both tools report every scenario safe.}
\label{tab:multiagent}
\begin{tabular}{l r r  r r r r  r r r r r  r}
\toprule
\multirow{2}{*}{Benchmark} & \multirow{2}{*}{$N$} & \multirow{2}{*}{Dim} &
\multicolumn{4}{c}{\ourtool} & \multicolumn{5}{c}{JuliaReach} &
\multirow{2}{*}{$\frac{\text{Rt}_{\text{JR}}}{\text{Rt}_{\ourtool}}$} \\
\cmidrule(lr){4-7} \cmidrule(lr){8-12}
 & & & States & Rt (s) & Mem & Sep & Loc. & Trans. & Rt (s) & Mem & Sep & \\
\midrule
Platoon & 2 & 4 & 4 of 9   & 1.84  & 353 & 11.13 & 9      & 9       & 77.5  & 1327 & 11.88 & 42$\times$ \\
Platoon & 3 & 6 & 7 of 27  & 6.22  & 356 & 7.70  & 27     & 45      & 242.6 & 1519 & 11.90 & 39$\times$ \\
Platoon & 4 & 8 & 11 of 81 & 21.50 & 359 & 4.26  & 81     & 189     & 446.8 & 1756 & 11.97 & 21$\times$ \\
\midrule
Lane change & 2 & 8 & 5 & 5.62 & 356 & --- & 11{,}025 & 122{,}580 & 241.4 & 2763 & --- & 43$\times$ \\
\bottomrule
\end{tabular}
\end{table}

\paragraph{Observations.}
\ourtool is between $21$ and $43$ times faster on every row, and the gap in the discrete state space is larger still: at $N=4$ on the platoon it visits $11$ of $81$ mode tuples, and on the lane-change scenario $5$, where the flattened automaton contains $11{,}025$ locations and $122{,}580$ transitions. \ourtool never constructs the product; the mode space is generated from each agent's decision logic as the analysis explores it.

JuliaReach's analysis is more precise; it certifies a separation of about $11.9$\,m at every $N$, close to the true value, whereas \ourtool certifies $11.13$, $7.70$ and $4.26$\,m as $N$ grows. Because \ourtool composes per-agent hyperrectangles, the uncertainty in one vehicle's braking time is inherited by the vehicle behind it and accumulates along the platoon. This is the same representational limitation behind the conservative ratios of Table~\ref{tab:comparison}.

Two properties of \ourtool's semantics had to be recovered by hand in the flattened models, neither having a counterpart in an automaton whose guards must be convex. A quantified guard such as ``some vehicle ahead is within $18$\,m'' is not convex, so it splits into one transition per sensed agent unless the modeler observes that the platoon ordering makes the immediate predecessor the only relevant neighbour: on the platoon that is the difference between $189$ and $351$ transitions at $N=4$. Urgency requires an invariant negating the guard, again a disjunction and again convex only under the ordering assumption; omitting it in the lane-change model left the branch in which a car does not change lane, and the analysis correctly reported the resulting collision.

\paragraph{Conclusions.}
On scenarios whose difficulty lies in the discrete structure rather than the continuous dimension, \ourtool is an order of magnitude faster and materializes a small fraction of the mode space, at the cost of looser reachable sets that degrade as agents accumulate. Both scenarios also required the modeler to supply, by hand and per benchmark, the split of quantified guards and the invariants that make transitions urgent---exactly the steps the translation of Section~\ref{sec:scenariotoHA} performs from the agent, map and sensor definitions.

\endgroup

\section{Related Works}
\label{sec:related}

Formal verification of hybrid systems has traditionally relied on specialized tools such as SpaceEx~\cite{Spaceex}, Flow*~\cite{ARCH15:Flow_star_1.2_More_Effective}, dReach~\cite{10.1007/978-3-662-46681-0_15}, C2E2~\cite{duggirala2015c2e2}, and CORA~\cite{Althoff2015a}. These operate on low-level modeling languages far removed from software development workflows: SpaceEx takes a custom XML format with explicitly flattened automata, while Flow* and dReach use text-based constraint specifications for ODEs and guards, so developers translate code into formulas by hand. They also lack native support for composition. Inter-agent interaction is modeled by action synchronization over shared labels, and abstractions like sensors or maps are not first-class, so sensor logic and map geometry are encoded directly into transition guards.

A more recent wave of tools has moved closer to the development process by supporting models written in general-purpose programming languages. Tools such as Hylaa~\cite{hylaaBak} and JuliaReach~\cite{JuliaReach19} support systems defined in Python and Julia, respectively, bridging the language gap for differential equation analysis. Similarly, DryVR~\cite{DryVRFanQM017} enables data-driven verification by combining black-box simulators for continuous dynamics (executable in Python) with a white-box transition graph (often specified in JSON) that defines the discrete mode switching structure.

These bridge the language gap but not the architectural one: they view a system as a monolithic set of differential equations or a single global automaton given by a transition graph, rather than a composition of interacting objects, and so lack agent classes, map objects or sensor interfaces. \ourtool treats these as first-class modules, with agents as Python objects interacting through sensor interfaces. It shares the multi-agent focus of SceneChecker~\cite{sibai-tacas-2020}, and adds decision logic in Python together with topological maps that decouple agent intent from geometry.

\rev{\ourtool's continuous post operator is neither a new set representation nor a new integration scheme. $\postcont$ is a pluggable interface whose implementations wrap existing algorithms---DryVR's simulation-driven discrepancy bounds~\cite{dryvr}, NeuReach~\cite{neuReach-SunM22}, and mixed-monotone decomposition~\cite{9304391}---so on a single continuous system \ourtool behaves like whichever is plugged in, and the conservatism in Table~\ref{tab:comparison} reflects their hyperrectangle representation. What is new is the layer above: the abstractions of Section~\ref{sec:scenarios} and the translation of Section~\ref{sec:scenariotoHA}, which build the mode space, guards and resets of a multi-agent automaton from per-agent decision logic, a map and sensors. Section~\ref{sec:exp:multiagent} measures what that buys and costs.}

\rev{A tool paper on \ourtool appeared at CAV 2023~\cite{10.1007/978-3-031-37706-8_18}. It introduced  agents, maps and sensors, the shape of the automaton $\ha(\scenario)$,  and the execution-tree algorithm built from the two post operators, and Section~\ref{sec:exp} reuses the scenarios it reported on. This article provides the detailed  semantics:  any guard of $\ha(\scenario)$ requires a mode change to be admissible under $\maplablab$, which is partial, and map compatibility constrains an agent's transitions; Theorem~\ref{thm:correspondence} relates the behaviors of a scenario and of its automaton. Sensors are set-valued, which is what makes bounded-noise sensing sound, and Proposition~\ref{prop:interleaving} states when the order in which simultaneously enabled agents move cannot matter. Section~\ref{sec:exp:multiagent} compares \ourtool with JuliaReach on two scalable multi-agent scenarios, which the ARCH benchmarks do not exercise.}

Beyond verification, a rich ecosystem supports creating and testing multi-agent scenarios: Scenic~\cite{scenic} defines scenario distributions probabilistically for guided sampling and falsification, Flow~\cite{Wu2017FlowAA} integrates traffic simulators for reinforcement learning, and AAM-GYM~\cite{brittain2022aamgym}, GRAIC~\cite{GRAICrace} and CARLA~\cite{Dosovitskiy17} provide environments for air mobility, racing and driving. These rely on sampling-based simulation or falsification, and so give no symbolic reachability or safety guarantee. \ourtool offers comparable scenario flexibility while retaining set-based analysis.

\section{Discussion and Future Works}
\label{sec:conclusion}

In this paper, we have presented a complete and rigorous mathematical formalism for the \ourtool library, bridging the gap between high-level, user-friendly Python scenarios and formal hybrid systems verification. We established the precise mathematical signatures for the core abstractions—Agents, Maps, and Sensors—and defined the I/O and Map compatibility criteria that formally determine whether a scenario is well-posed. Furthermore, we provided the translation algorithm that compiles these high-level scenarios into hybrid automata, enabling the application of sound reachability analysis. Our experimental evaluation demonstrated that this approach is highly expressive, capable of modeling complex, heterogeneous multi-agent systems, and computationally competitive with state-of-the-art tools.

Despite these advances, the current framework has limitations that point toward important future directions. First, regarding perception, the current Sensor abstraction ($S_i$) is a simplified set-valued function. While it can model basic uncertainty intervals, it does not fully capture the complex, state-dependent errors characteristic of modern, machine-learning-based perception pipelines, nor does it account for environmental factors such as lighting or weather. Second, regarding analysis precision, the current reachability engine typically relies on hyperrectangles or similar representations. As observed in our linear benchmarks, this can lead to conservative approximations compared to tools optimized for specific dynamics, potentially limiting its utility for systems with tight safety margins. \rev{Third, the formalism does not cover agents whose continuous dynamics are coupled. Each $\agentflow_i$ reads only agent $i$'s own state, so interaction is confined to discrete transitions, and a platoon controller that feeds a neighbour's position into the vector field is outside what \ourtool expresses.}

To address these challenges, our future work will focus on two key enhancements. To improve the realism of the sensor model, we plan to integrate perception contracts ~\cite{pmlr-v283-li25c}. This method allows for the rigorous analysis of vision-based control systems by deriving formal contracts for Deep Neural Network (DNN) observers, enabling the automatic identification of Operating Design Domains (ODDs) where system stability is assured. Simultaneously, to address the conservatism in reachability analysis, we aim to incorporate Generalized Star Sets as a core data structure. Preliminary work on "Starred Verse" indicates that this representation can significantly improve precision—reducing reachset diameters by up to 50\% in high-dimensional scenarios—by better capturing the dependencies between continuous state variables.

\begin{acks}
We thank the NASA sponsored ULI research center AVIATE\footnote{\url{https://aviate.illinois.edu/}} at the University of Illinois at Urbana-Champaign for supporting this work. We are also thankful to Bach Viet Hoang for his help with experiments.
\end{acks}

\bibliographystyle{ACM-Reference-Format}
\bibliography{egbib,sayan1}

\end{document}
\endinput